\IfFileExists{IEEEtran.cls}{
  \documentclass[conference]{IEEEtran}
}{
  \documentclass{paperreview}
}
\IEEEoverridecommandlockouts
\usepackage{cite}
\usepackage{graphicx}
\usepackage{textcomp}
\usepackage{xcolor}
\usepackage{balance}
\usepackage[cmex10]{amsmath}
\usepackage{amssymb}
\usepackage{amsthm}
\usepackage{amsfonts}
\usepackage{bm}
\usepackage{xfrac}
\usepackage{empheq}
\usepackage{mathtools}
\newtheorem{theorem}{Theorem}

\newtheorem{example}{Example}
\newtheorem{proposition}{Proposition}
\newtheorem{lemma}{Lemma}

\newtheorem{corollary}{Corollary}
\newtheorem{remark}{Remark}
\theoremstyle{definition}

\def\BibTeX{{\rm B\kern-.05em{\sc i\kern-.025em b}\kern-.08em
    T\kern-.1667em\lower.7ex\hbox{E}\kern-.125emX}}
\begin{document}

\title{On Privacy Mechanism Design Under a Bounded $\alpha$-Lift Leakage Constraint with $\chi^2$-Utility
%{\footnotesize \textsuperscript{*}Note: Sub-titles are not captured for https://ieeexplore.ieee.org  and
%should not be used}
%\thanks{Identify applicable funding agency here. If none, delete this.}
}

\author{\IEEEauthorblockN{1\textsuperscript{st} Amirreza Zamani}
\IEEEauthorblockA{\textit{Department of Information Science and Engineering, KTH} \\
%\textit{and Engineering}\\
Stockholm, Sweden \\
amizam@kth.se}
\and
\IEEEauthorblockN{2\textsuperscript{nd} Mohammad A. Zarrabian}
\IEEEauthorblockA{\textit{Telstra} \\
%\textit{ }\\
Sydney, Australia \\
amin.zarrabian@team.telstra.com}\and
\IEEEauthorblockN{3\textsuperscript{rd} Parastoo Sadeghi}
\IEEEauthorblockA{\textit{School of Engineering and Technology, UNSW} \\
%\textit{ }\\
Canberra, Australia \\
p.sadeghi@unsw.edu.au}
\and
\IEEEauthorblockN{4\textsuperscript{th} Mikael Skoglund}
\IEEEauthorblockA{\textit{Department of Information Science and Engineering, KTH} \\
%\textit{}\\
Stockholm, Sweden \\
skoglund@kth.se}}
\maketitle
\suppressfloats[t]
\ifdefined\paperreviewmode\thispagestyle{review}\fi
\begin{abstract}
We study the design of a statistical privacy mechanism under a bounded $\alpha$-lift leakage constraint. An agent observes useful data $Y$, which is arbitrarily correlated with private data $X$, and releases $U$. The goal is to maximize Pearson $\chi^2$-information $\chi^2(Y;U)$ while satisfying the point-wise privacy constraint. This utility is an $f$-information that is nonnegative, obeys data processing, and measures the aggregate normalized change from the prior $P_Y$ to the posteriors $P_{Y|U=u}$.
The main benefit of this choice is that $\chi^2(Y;U)$ is exactly quadratic in the posterior perturbations; hence, no local approximation of the utility is required. For the nonlinear $\alpha$-lift constraint, we derive a uniform Taylor-remainder bound that remains valid for the optimizing, $\epsilon$-dependent perturbation. When $P_{X|Y}$ is invertible, this yields a closed-form quadratic surrogate, an explicitly scaled binary mechanism that satisfies the original $\alpha$-lift constraint, a computable sufficient range for $\epsilon$, and an $O(\epsilon^{3/2})$ bound on the utility gap. For $\alpha=2$, the solution is exact over an explicit range. We transfer these results to point-wise R\'enyi-divergence privacy, including an exact-feasible mechanism, an explicit privacy-budget range, and matching gap bounds. We also give an exact formulation for arbitrary leakage matrices and evaluate exact-feasible mechanisms numerically.
\end{abstract}

\begin{IEEEkeywords}
point-wise privacy, $\alpha$-lift, R\'enyi divergence, $\chi^2$-information, privacy--utility trade-off, local approximation.
\end{IEEEkeywords}

\section{Introduction}
In this paper, as shown in Fig.~\ref{sys1}, an agent observes useful data represented by the random variable (RV) $Y$, which is arbitrarily correlated with private data represented by the RV $X$. Due to privacy constraints, the agent employs a privacy mechanism to generate disclosed data, represented by the RV $U$. The agent’s goal is to design $U$ based on $Y$ so as to reveal as much information as possible about $Y$ while satisfying a given privacy criterion.
 We use Pearson $\chi^2$-information to measure utility and $\alpha$-lift to measure privacy leakage.

  Related works on the statistical privacy mechanism design can be found in
\cite{long,khodam,Khodam22,zarab2,zarab3,seif,duchi,kairouz,feldman2018privacy, Total, asoodeh3, Calmon1,razegh,emma,ang,sep,lopuha,Calmon2}.
 %In this paper, we study the obtained privacy-utility trade-off.
 
 %i.e., for all $x$ and $u$ we require $-\epsilon\leq \log(\tfrac{P_{X|U}(x|u)}{P_{X}(x)})\leq \epsilon$.

\begin{figure}[t]
	\centering
	\includegraphics[width = 0.4\textwidth]{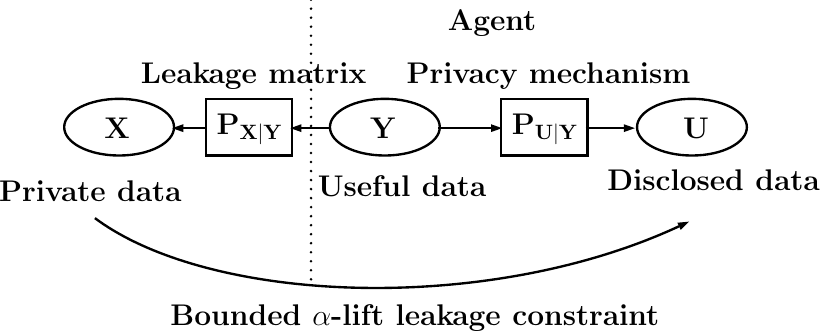}
		\caption{The disclosed data $U$ is designed to maximize $\chi^2(Y;U)$ subject to bounded $\alpha$-lift. The agent has no direct access to $X$; hence, $X-Y-U$ forms a Markov chain.}
	
	\label{sys1}
\end{figure}
\subsection{Related works}
%\textbf{Related works:}
In \cite{sep}, the design of a statistical privacy mechanism that maximizes utility from disclosed data while satisfying privacy constraints is studied. Lower bounds on the privacy--utility trade-off are derived using extended Functional Representation Lemmas and a separation technique.

 %, including scenarios with prioritized private data
 %In \cite{koala}, \emph{secrecy by design} problem is studied under the perfect secrecy assumption. Bounds on secure decomposition have been derived using the Functional Representation Lemma. %and new bounds on privacy-utility trade-off are derived. %The bounds are tight when the private data is a deterministic function of the useful data. 
%In \cite{shah}, the statistical privacy design problems considered in \cite{koala} are generalized by relaxing the perfect secrecy constraint and allowing some leakage. Furthermore, the bounds in \cite{shah} have been tightened in \cite{sep} by using a \emph{separation technique}.
 The concept of maximal leakage is introduced in \cite{issa}, which quantifies the logarithm of the maximum multiplicative increase in an adversary’s probability of correctly guessing any function of a private variable after observing the disclosed data.
The concepts of $\alpha$-leakage and maximal $\alpha$-leakage are introduced in \cite{liao2019tunable}. They are tunable privacy measures that provide different levels of privacy by changing values of $\alpha$. For examples, maximal $\alpha$-leakage equals the Shannon mutual information for $\alpha=1$ and maximal leakage when $\alpha=\infty$.  

In \cite{QIF2025}, earlier $\alpha$-based leakage formulations
are critically examined, and inconsistencies in their
adversarial interpretations and in the distinction between
leakage and leakage capacity are identified. A generalized
quantitative information-flow framework based on
Kolmogorov--Nagumo means is developed to provide consistent
interpretations of these measures. Within this framework,
introducing an auxiliary randomized function of the secret
as the adversary's guessing target is shown to be superfluous:
the corresponding generalized maximal leakage is recovered
as a generalized leakage capacity without this additional
construction. The pointwise information-gain formulation
further identifies posterior-to-prior R\'enyi divergence
as pointwise $\alpha$-leakage and its generalized average
as Sibson mutual information. Related representations and
generalized $g$-leakage interpretations of
$\alpha$-mutual information are developed in
\cite{kamatsuka2025representations,kamatsuka2026generalized}.
These operational interpretations motivate the privacy
criterion considered here, while the present work addresses
mechanism design under an output-wise leakage constraint.

The concept of \textit{lift} is studied
in \cite{zarab2} which represents the likelihood ratio between the posterior and prior beliefs
concerning sensitive features within a dataset.
The concept of \textit{$\alpha$-lift} is introduced in \cite{ding2021alpha}, which is another tunable measure and equals to \textit{max-lift} (worst-case lift) for $\alpha=\infty$. In \cite{zarab3}, a heuristic algorithm is proposed to find the optimal privacy-utility trade-off under bounded $\alpha$-lift leakage constraint. The concept of \textit{multi-level} leakage is studied in \cite{multileakage}, where the goal is to maximize utility under multi-level point-wise privacy constraints, and low-complexity solutions based on local information-geometric approximations are derived. 

As discussed in \cite{khodam} and \cite{Khodam22}, average privacy measures may not be desirable, since certain realizations can leak more information than a prescribed threshold. For instance, average $\chi^2$ and $\ell_1$ measures were used in \cite{Calmon2} and \cite{Total}, respectively, where the average is taken over the letters of $U$. These notions were later strengthened by introducing point-wise (strong) $\chi^2$ and $\ell_1$ privacy measures in \cite{khodam} and \cite{Khodam22}, and it was shown that, using tools from information geometry, the complex design problem can be approximated via linear algebraic techniques.
%The concept of \emph{lift}, which corresponds to another point-wise measure, is studied in \cite{zarab2} and represents the likelihood ratio between the posterior and prior beliefs concerning sensitive features within a dataset.
Concepts from information geometry have been used in several network information theory problems, as well as in privacy, secrecy, and fairness design problems, to approximate complex optimization problems and derive simple designs \cite{Shashi,huang,khodam,Khodam22,razegh,emma,ang,long,multileakage}. Locally, mutual information is one half of a weighted squared Euclidean distance. %to second order. 
Pearson $\chi^2$-information is precisely that quadratic quantity, which motivates using it directly as the utility rather than approximating a mutual-information objective.
Particularly, this approach has been used in privacy context in \cite{khodam,Khodam22,razegh,emma,long,ang,multileakage}, where in \cite{ang}, mutual information is used as the privacy measure and relative entropy as the utility measure, both of which are approximated by quadratic functions. In \cite{long}, the privacy mechanisms are designed considering LIP and max-lift as the privacy leakage measures. %Furthermore, it has been used in \cite{zamani2025fair2,zamani2025fair} to design fair mechanisms under bounded demographic parity and equalized odds constraints.
Furthermore, in \cite{emma}, using information geometry leads to a local approximation of the secrecy capacity over a wire-tap channel.
%Lastly, this approach is used in \cite{multileakage} to address the privacy-utility trade-off under multi-level privacy constraints.
%Lastly, a similar approach has been used in \cite{ang}, where a hypothesis testing problem is considered under a privacy leakage constraint defined by bounded mutual information. In high privacy regimes, the problem is approximated.
\subsection{Contributions}
%\textbf{Contributions:}
In this paper, we study the privacy--utility trade-off under the same $\alpha$-lift constraint as in \cite{zarab3}, but with $\chi^2$-information as utility. This choice is particularly natural here for four reasons. First, it is an $f$-information and therefore obeys the data-processing inequality. Second, it vanishes exactly when $Y$ and $U$ are independent. Third, it equals the prior-normalized variance of the posterior probabilities, so it measures the aggregate gain in estimating the indicator functions of the useful symbols. Finally, when $\max_{u,y}|P_{Y|U}(y|u)/P_Y(y)-1|\to0$, it retains the local behavior of mutual information, $I(Y;U)=\tfrac12\chi^2(Y;U)+o(\chi^2(Y;U))$, while making the quadratic utility expression exact \cite{Calmon2,khodam}.
Our contribution is consequently sharper than a two-sided local approximation: only the $\alpha$-lift constraint is expanded. For invertible $P_{X|Y}$, we solve the resulting quadratic problem in closed form and obtain a binary mechanism from one singular vector. We then control the discarded term uniformly, shrink the surrogate radius to guarantee exact feasibility, derive an explicit sufficient $\epsilon$ range, and sandwich the original optimum between computable bounds whose absolute gap is $O(\epsilon^{3/2})$. 
The remainder bound is established on an $\ell_\infty$-neighborhood of the prior: its auxiliary radius $\rho\in(0,1)$ bounds the normalized posterior perturbation
$
\max_x\left|\frac{P_{X|U}(x|u)}{P_X(x)}-1\right|.
$
We give an admissible choice of $\rho$ computed from $P_{XY}$, $\alpha$, and $\epsilon$.
Using the exact relation between $\alpha$-lift and R\'enyi divergence of the same order, we also obtain an exactly feasible, asymptotically optimal design for the point-wise R\'enyi-private problem and state its mechanism, budget range, and approximation gap explicitly. Finally, we formulate the arbitrary-matrix case without approximating either the utility or the privacy constraint and numerically evaluate the exact-feasible construction.
\section{System Model and Problem Formulation} \label{system}
Let $P_{XY}$ denote the joint distribution of discrete random variables $X$ and $Y$ defined on finite alphabets $\mathcal{X}$ and $\mathcal{Y}$ with equal cardinality, i.e., $|\mathcal{X}|=|\mathcal{Y}|=\mathcal{K}\geq2$. We represent $P_{XY}$ by a matrix defined on $\mathbb{R}^{\mathcal{K}\times\mathcal{K}}$ and %We assume that $X$ and $Y$are defined on spaces that have the same cardinality, i.e., $|\mathcal{X}|=|\mathcal{Y}|=\mathcal{K}\geq2$. 
marginal distributions of $X$ and $Y$ by vectors $P_X$ and $P_Y$ defined on $\mathbb{R}^\mathcal{K}$. We assume that each element in vectors $P_X$ and $P_Y$ is non-zero. Furthermore, we represent the leakage matrix $P_{X|Y}$ by a matrix defined on $\mathbb{R}^{\mathcal{K}\times\mathcal{K}}$ which is assumed to be invertible. Furthermore, for a given $u\in \mathcal{U}$, $P_{X|U}(\cdot|u)$ defined on $\mathbb{R}^{|\mathcal{X}|}$ is a distribution vector with elements $P_{X|U}(x|u)$ for all $x\in\mathcal{X}$. 
All logarithms are natural, $\epsilon\geq0$, and only output symbols of positive probability are retained. The relation between $U$ and $Y$ is described by the kernel $P_{U|Y}$ defined on $\mathbb{R}^{|\mathcal{U}|\times\mathcal{K}}$. 
In this work, $P_{X}(x)$, $P_{X}$, and $\sqrt{P_{X}}$ denote $P(X=x)$, the distribution vector of $X$, and the vector with entries $\sqrt{P_X(x)}$, respectively. We write $D_X\triangleq[P_X]$ and $D_Y\triangleq[P_Y]$ for the corresponding diagonal matrices. For distributions $P$ and $Q$ on the same finite alphabet, with $Q$ strictly positive, the Pearson $\chi^2$-divergence is $\chi^2(P\|Q)=\sum_a(P(a)-Q(a))^2/Q(a)$. We define the $\chi^2$-information between $Y$ and $U$ as
$
\chi^2(Y;U)\triangleq\chi^2(P_{YU}\|P_YP_U)
=\sum_uP_U(u)\chi^2(P_{Y|U=u}\|P_Y).
$
Equivalently, $\chi^2(Y;U)=\sum_y\operatorname{Var}(P_{Y|U}(y|U))/P_Y(y)$. This also has a direct estimation interpretation. If $B_y=\bm1\{Y=y\}$, the law of total variance gives
$
\chi^2(Y;U)=\sum_y\frac{1}{P_Y(y)}\bigl(\mathbb E[(B_y-P_Y(y))^2]
-\mathbb E[(B_y-P_{Y|U}(y|U))^2]\bigr).
$
Thus, the utility is the aggregate prior-normalized reduction in Bayes mean-square error for estimating the useful-symbol indicators from $U$.
To verify the interpretation, write $q_U(y)=P_{Y|U}(y|U)$. Total probability gives $\mathbb E[q_U(y)]=P_Y(y)$, so the defining sum of squared deviations is exactly
$\sum_y\operatorname{Var}(q_U(y))/P_Y(y)$. Also,
$\mathbb E[B_y\mid U]=q_U(y)$, and the variance decomposition for $B_y$ yields the displayed reduction. Nonnegativity is immediate; equality to zero holds precisely when every positive-probability posterior equals $P_Y$, equivalently $Y$ and $U$ are independent.

For clarity, the local mutual-information comparison in the introduction requires uniform posterior closeness. Set
$g_u(y)=q_u(y)/P_Y(y)-1$ and $\delta=\max_{u,y}|g_u(y)|<1$. Taylor expansion of $f(t)=(1+t)\log(1+t)$ uses $f(0)=0$, $f'(0)=f''(0)=1$, and $f'''(t)=-(1+t)^{-2}$. Since $\sum_yP_Y(y)g_u(y)=0$, averaging the expansions gives
\begin{align}
\left|I(Y;U)-\tfrac12\chi^2(Y;U)\right|
\leq \frac{\delta}{6(1-\delta)^2}\chi^2(Y;U).
\label{mi-comparison}
\end{align}
This proves the claimed local relation when $\delta\to0$, even along varying mechanisms. In the invertible model below,
$g_u=D_Y^{-1}P_{X|Y}^{-1}D_Xh_u$, so the uniform bound on $h_u$ in Theorem~\ref{thm:feasible} also ensures $\delta=O(\sqrt\epsilon)$. No such expansion is used to evaluate our utility.
Furthermore, we assume that the private data is not directly accessible by the agent, which implies the Markov chain $X - Y - U$.
Our goal is to design the privacy mechanism that produces $U$, which maximizes the utility and satisfies a privacy criterion.
In this work, utility is measured by $\chi^2(Y;U)$. Privacy leakage is quantified by the bounded $\alpha$-lift given by 
\begin{align}\mathcal{L}^{\alpha}(X \rightarrow u)\triangleq \left(\sum_{x}P_X(x)\left(\frac{P_{X|U}(x|u)}{P_X(x)}\right)^{\alpha}\right)^{\frac{1}{\alpha}}\leq e^{\epsilon},\label{loc}\end{align}
where $\alpha\in(1,\infty)$. We emphasize that \eqref{loc} must hold for each $u\in\mathcal{U}$. However, it corresponds to a semi-pointwise constraint due to the averaging over $X$.
 Thus, the privacy mechanism design problem can be stated as follows 
\begin{subequations}\label{problem}
	\begin{align}
		\sup_{P_{U|Y}} \ \ &\chi^2(Y;U),\label{problem1}\\
	\text{subject to:}\ \ &X-Y-U,\label{Markov1}\\
	 &\mathcal{L}^{\alpha}(X \rightarrow u)\leq e^\epsilon, \forall u\in \mathcal{U},\label{local1}
	\end{align}
\end{subequations}
%where $\mathcal{L}(X \rightarrow u_i)=D_f(P_{X|U=u_i};P_X)$. 
\begin{remark}
\normalfont
We recall that here we focus on the case of $\alpha \in (1,\infty)$. For $\alpha = \infty$, $\mathcal{L}(X \rightarrow u)$ reduces to the max-lift, which has been studied in \cite{long}. More specifically, the privacy–utility trade-off under bounded LIP (Local Information Privacy) and max-lift is studied in \cite{long}, where concepts from information geometry have been applied to approximate the main problem. %Lastly, for $\alpha = 1$, $\mathcal{L}(X \rightarrow U)$ reduces to 
\end{remark}
\begin{remark}
    \normalfont
    The properties of $\alpha$-lift are provided in \cite{zarab3}. For example, $\mathcal{L}^{\alpha}$ is non-decreasing in $\alpha$. More precisely, R\'enyi divergence of order $\alpha$ satisfies
    $D_\alpha(P_{X|U=u}\|P_X)=\frac{\alpha}{\alpha-1}\log\mathcal L^\alpha(X\!\to\!u)$.
    Thus, \eqref{loc} is exactly equivalent to
    $D_\alpha(P_{X|U=u}\|P_X)\leq\alpha\epsilon/(\alpha-1)$, rather than merely implying it. This identity also connects $\alpha$-lift to Sibson mutual information; see \cite{renyi1961measures} and \cite[Remark~1]{zarab3}.
\end{remark}
\begin{remark}
    \normalfont
    As outlined in \cite{zarab3}, increasing $\alpha$ imposes a more stringent privacy criterion; therefore, the optimum in \eqref{problem} is non-increasing in $\alpha$. %In other words, \eqref{problem} is a non-increasing function of $\alpha$.
\end{remark}
Similar to \cite[Proposition~1]{khodam} and \cite[Proposition~2]{Khodam22}, and using similar proof techniques, we obtain the following result.
\begin{proposition}\label{car1}
	It suffices to consider $U$ such that $|\mathcal{U}|\leq|\mathcal{Y}|$. Furthermore, a maximum can be used in \eqref{problem}. %since the
		%corresponding supremum is achieved.
\end{proposition}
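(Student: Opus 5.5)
The plan is to use the standard convex-geometry (Carathéodory / Fenchel–Eggleston) argument from \cite[Proposition~1]{khodam}. The first step is to reparametrize the problem by the pairs $\{(P_U(u),P_{Y|U}(\cdot|u))\}_u$ instead of the kernel $P_{U|Y}$. Under the Markov chain $X-Y-U$ we have $P_{X|U}(\cdot|u)=P_{X|Y}P_{Y|U}(\cdot|u)$. Hence the constraint \eqref{local1} depends only on the posterior $P_{Y|U}(\cdot|u)$. It says that $P_{Y|U}(\cdot|u)$ lies in
\[
\mathcal{S}\triangleq\Bigl\{q\in\Delta_{\mathcal{K}}:\ \Bigl(\sum_x P_X(x)\bigl(\tfrac{(P_{X|Y}q)(x)}{P_X(x)}\bigr)^{\alpha}\Bigr)^{1/\alpha}\le e^{\epsilon}\Bigr\}.
\]
This set is closed and in fact convex, since $\mathcal{L}^\alpha$ is a weighted $\ell_\alpha$-norm of an affine function of $q$. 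Conversely, any finite family of weights $P_U(u)>0$ and posteriors $q_u\in\mathcal S$ with $\sum_u P_U(u)q_u=P_Y$ defines a valid kernel via Bayes' rule, $P_{U|Y}(u|y)=P_U(u)q_u(y)/P_Y(y)$. The problem is therefore equivalent to maximizing $\sum_u P_U(u)\chi^2(q_u\|P_Y)$ over such decompositions of $P_Y$.

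Next, for a given feasible mechanism I apply Carathéodory's theorem to the point
\[
\bigl(P_Y,\ \chi^2(Y;U)\bigr)=\sum_u P_U(u)\bigl(q_u,\ \chi^2(q_u\|P_Y)\bigr)\in\mathbb{R}^{\mathcal{K}+1}.
\]
Since $P_Y$ lies on the $(\mathcal K-1)$-dimensional simplex, we only need to preserve $\mathcal K-1$ free coordinates of $P_Y$ and the utility value. That is $\mathcal K$ real functionals in total, so Carathéodory gives a representation with at most $\mathcal K+1$ points. To reach $|\mathcal U|\le|\mathcal Y|=\mathcal K$, I use the Fenchel–Eggleston refinement. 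For it, the curve/set $\{(q,\chi^2(q\|P_Y)):q\in\mathcal S\}$ must be connected, which holds because $\mathcal S$ is convex. Then $\mathcal K$ points suffice.

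An alternative that avoids Fenchel–Eggleston uses a support-point argument. Because $\chi^2(q\|P_Y)$ is convex in $q$, among decompositions preserving $\sum_u P_U(u)q_u=P_Y$ the objective is linear in the weights $P_U$. An extreme point of the polytope of weights satisfying $\mathcal K$ linear equalities (the $\mathcal K$ coordinates of $P_Y$, one of which encodes normalization) has at most $\mathcal K$ nonzero entries. Moving to an extreme point does not decrease the linear objective. This cleanly gives $|\mathcal U|\le\mathcal K$ without needing connectedness. I would present this version.

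Finally, for the existence of a maximum, I restrict to $|\mathcal U|=\mathcal K$ and consider the feasible set
\[
\Bigl\{(P_U(u),q_u)_{u=1}^{\mathcal K}:\ q_u\in\mathcal S,\ P_U\in\Delta_{\mathcal K},\ \sum_u P_U(u)q_u=P_Y\Bigr\}.
\]
This set is compact, as a closed subset of a product of simplices. The objective $\sum_u P_U(u)\chi^2(q_u\|P_Y)$ is continuous there, because $P_Y>0$. Allowing $P_U(u)=0$ is harmless, since such symbols contribute nothing and are dropped. The set is nonempty, as $q_u=P_Y$ gives leakage $1\le e^\epsilon$. By Weierstrass the supremum is attained. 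The main obstacle I expect is the cardinality reduction step: I must make sure that the reduced decomposition still consists only of posteriors in $\mathcal S$, which holds since we only reweight existing $q_u$. I also need the count of linear constraints to yield exactly $\mathcal K$ rather than $\mathcal K+1$; the extreme-point argument handles this.
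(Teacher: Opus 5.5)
Your proof is correct, and the cardinality step is the same as the paper's. The paper takes $m>n$ positive weights, finds a null combination $a$ of the vectors $(1,q_i(1),\ldots,q_i(n-1))$, and slides the weights along $\pm a$ until one of them vanishes. That is exactly the standard proof that a vertex of $\{w\geq 0:\sum_u w_uq_u=P_Y\}$ has at most $\mathcal K$ nonzero entries, so your LP-vertex version is a repackaging; the paper describes it this way in the proof of Theorem~\ref{baghal5}.

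The real difference is how you get attainment. The paper first uses Carath\'eodory to make the convex hull of the graph $\{(q(1),\ldots,q(n-1),g(q)):q\in\mathcal Q_\epsilon\}$ compact. It then maximizes the last coordinate over the section at $P_Y$, which gives an optimum with at most $n+1$ outputs, and only then trims to $n$. You instead reduce every finite mechanism to at most $\mathcal K$ outputs first, then apply Weierstrass to the compact set $\Delta_{\mathcal K}\times\mathcal S^{\mathcal K}$ cut out by the barycenter constraint.

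Your ordering is more elementary and avoids Carath\'eodory and Fenchel--Eggleston entirely, so you can drop that paragraph. Its cost is scope: it only dominates mechanisms with finitely many outputs. The paper's closed-convex-hull argument additionally shows that a mechanism with infinitely many outputs (an arbitrary law on posteriors) cannot beat the $|\mathcal Y|$-output optimum. Since the paper models $P_{U|Y}$ as a $|\mathcal U|\times\mathcal K$ matrix, your version suffices for the statement as posed.

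One small correction: for fixed posteriors the objective is linear in the weights whatever the shape of $\chi^2(\cdot\|P_Y)$, so convexity plays no role in that step. Likewise, convexity of $\mathcal S$ is not needed; closedness is enough.
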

\begin{proof}
Put $A\triangleq P_{X|Y}$ and let $\Delta_{\mathcal Y}$ denote the probability simplex on $\mathcal Y$.  A release mechanism can be described by its output probabilities $p_u\triangleq P_U(u)$ and its posterior vectors $q_u\triangleq P_{Y|U=u}$.  Bayes' rule and the law of total probability give
\begin{align}
q_u&\in\Delta_{\mathcal Y},\notag\\
\sum_up_uq_u&=P_Y,\notag\\
P_{X|U=u}&=Aq_u.                                           \label{posterior-mixture}
\end{align}
Conversely, because $P_Y$ has full support, any finite collection satisfying \eqref{posterior-mixture} induces the channel
\begin{align}
P_{U|Y}(u|y)=\frac{p_uq_u(y)}{P_Y(y)}.                     \label{bayes-channel}
\end{align}
The entries in \eqref{bayes-channel} are nonnegative, and the barycenter identity in \eqref{posterior-mixture} shows that they sum to one for every $y$.

Define the exact feasible-posterior set
\begin{align*}
\mathcal Q_\epsilon
\triangleq\left\{q\in\Delta_{\mathcal Y}:
\sum_xP_X(x)\left(\frac{(Aq)(x)}{P_X(x)}\right)^\alpha
\leq e^{\alpha\epsilon}\right\}.
\end{align*}
It is a closed subset of the compact simplex and is therefore compact.  The objective contribution
\begin{align*}
g(q)\triangleq\sum_y\frac{(q(y)-P_Y(y))^2}{P_Y(y)}
\end{align*}
is continuous. To prove attainment and the cardinality bound without assuming finite support, write $n=|\mathcal Y|$ and consider the compact graph
\begin{align*}
\mathcal S=\{(q(1),\ldots,q(n-1),g(q)):q\in\mathcal Q_\epsilon\}
\subset\mathbb R^n.
\end{align*}
Carath\'eodory's theorem says that every point of $\operatorname{conv}(\mathcal S)$ is a convex combination of at most $n+1$ graph points. In particular, this convex hull is compact: it is the continuous image of the compact set of $n+1$ graph points and their simplex of mixing weights. Any posterior distribution, including one with infinitely many outputs, has its vector of averaged coordinates and utility in this closed convex hull. Conversely, every such convex combination with barycenter $P_Y$ reconstructs a finite channel by \eqref{bayes-channel}.

The section of $\operatorname{conv}(\mathcal S)$ whose first $n-1$ coordinates equal those of $P_Y$ is nonempty, since $P_Y\in\mathcal Q_\epsilon$, and compact. Its last coordinate therefore has a maximum. Carath\'eodory's theorem gives an attaining channel with at most $n+1$ outputs.

We can reduce this support to $n$. For any finite optimum with $m>n$ positive weights $p_i$, the vectors $(1,q_i(1),\ldots,q_i(n-1))$ are linearly dependent. Hence some nonzero coefficients $a_i$ satisfy
\begin{align*}
\sum_i a_i=0,\qquad \sum_i a_iq_i=0.
\end{align*}
Both $p_i+ta_i$ and $p_i-ta_i$ remain nonnegative for small $t>0$ and preserve the barycenter. Optimality then forces $\sum_i a_i g(q_i)=0$; otherwise one sign would increase the utility. Since the $a_i$ have both signs, increasing $t$ in $p_i+ta_i$ up to
$\min_{i:a_i<0}p_i/(-a_i)$ removes at least one positive weight while preserving utility and feasibility. Repeating yields an optimum with at most $n$ outputs. This argument uses only full support of $P_Y$ and compactness, so it also applies to arbitrary leakage matrices.
\end{proof}
Finally, in this paper, we also discuss how to generalize our framework to arbitrary alphabet sizes $|\mathcal{X}|$ and $|\mathcal{Y}|$.% as well as to any leakage matrix.
%\vspace{-3mm}
\section{Main Results}
We work with the actual distributional perturbation, rather than a perturbation vector that is held fixed as $\epsilon$ changes. Define
$Z_u\triangleq D_X^{-1/2}(P_{X|U=u}-P_X)$ and the likelihood-ratio perturbation $h_u\triangleq D_X^{-1/2}Z_u$. Let $r_u\triangleq\|Z_u\|_2$, so $r_u^2=\chi^2(P_{X|U=u}\|P_X)$. Normalization and total probability give
\begin{align}
Z_u&\perp\sqrt{P_X},\quad \forall u,\label{proper1}\\
\sum_uP_U(u)Z_u&=0.\label{proper2}
\end{align}
To see these identities explicitly, note that
\begin{align}
P_{X|U=u}(x)=P_X(x)\bigl(1+h_u(x)\bigr),                 \label{likelihood-coordinate}
\end{align}
and therefore
\begin{align}
\sum_xP_X(x)h_u(x)=0,\qquad
r_u^2=\sum_xP_X(x)h_u(x)^2.                              \label{weighted-moments}
\end{align}
The first equality in \eqref{weighted-moments} is exactly $Z_u\perp\sqrt{P_X}$.  Averaging \eqref{likelihood-coordinate} over $U$ yields \eqref{proper2}.

The following bound is uniform over all perturbations in a prescribed neighborhood, including perturbations that depend on $\epsilon$.
\begin{lemma}[Uniform remainder]\label{lem:uniform}
Fix $\rho\in(0,1)$ and suppose $\|h_u\|_\infty\leq\rho$. Define
\begin{align*}
A_\alpha&\triangleq\frac{\alpha(\alpha-1)}2,\
M_{\alpha,\rho}\triangleq
\max\{(1-\rho)^{\alpha-3},(1+\rho)^{\alpha-3}\},\\
K_{\alpha,\rho}&\triangleq
\frac{|\alpha(\alpha-1)(\alpha-2)|}{6}M_{\alpha,\rho}.
\end{align*}
Then, we have
\begin{align}
\bigl(\mathcal L^\alpha(X\!\to\!u)\bigr)^\alpha
&=1+A_\alpha r_u^2+R_u,\nonumber\\
|R_u|&\leq K_{\alpha,\rho}\|h_u\|_\infty r_u^2.\label{uniform}
\end{align}
\end{lemma}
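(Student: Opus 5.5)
We use the likelihood coordinate \eqref{likelihood-coordinate} to reduce the statement to a scalar Taylor expansion applied pointwise in $x$, and then average. By \eqref{likelihood-coordinate}, $P_{X|U}(x|u)/P_X(x)=1+h_u(x)$, so
\begin{align*}
\bigl(\mathcal L^\alpha(X\!\to\!u)\bigr)^\alpha=\sum_xP_X(x)\bigl(1+h_u(x)\bigr)^\alpha .
\end{align*}
Set $\phi(t)=(1+t)^\alpha$ on $[-\rho,\rho]\subset(-1,1)$. It is smooth there, with $\phi(0)=1$, $\phi'(0)=\alpha$, $\phi''(0)=\alpha(\alpha-1)=2A_\alpha$, and $\phi'''(t)=\alpha(\alpha-1)(\alpha-2)(1+t)^{\alpha-3}$.

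For each $x$, we apply Taylor's theorem with Lagrange remainder to second order. With $t=h_u(x)$ and some $\xi_x$ between $0$ and $t$, this gives
\begin{align*}
\phi(t)=1+\alpha t+A_\alpha t^2+\tfrac16\phi'''(\xi_x)t^3 .
\end{align*}
Multiply by $P_X(x)$ and sum over $x$. By the first identity in \eqref{weighted-moments}, the linear term $\alpha\sum_xP_X(x)h_u(x)$ vanishes. By the second identity, the quadratic term equals $A_\alpha r_u^2$. This yields the claimed decomposition with
\begin{align*}
R_u=\tfrac16\sum_xP_X(x)\phi'''(\xi_x)h_u(x)^3 .
\end{align*}

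To bound the remainder, note that $|\xi_x|\le|h_u(x)|\le\rho$. The function $s\mapsto(1+s)^{\alpha-3}$ is monotone on $[-\rho,\rho]$, increasing or decreasing according to the sign of $\alpha-3$. Its maximum over the interval is therefore attained at an endpoint, so it is at most $M_{\alpha,\rho}$; when $\alpha=3$ the function is constant and the bound is trivial. Hence $|\phi'''(\xi_x)|/6\le K_{\alpha,\rho}$. Using $|h_u(x)|^3\le\|h_u\|_\infty h_u(x)^2$ and \eqref{weighted-moments} once more,
\begin{align*}
|R_u|\le K_{\alpha,\rho}\|h_u\|_\infty\sum_xP_X(x)h_u(x)^2=K_{\alpha,\rho}\|h_u\|_\infty r_u^2,
\end{align*}
which is \eqref{uniform}.

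There is no serious obstacle here. The main point needing care is that the constant must be uniform. The bound depends only on $\rho$ through the endpoint evaluation, not on $h_u$ or $\epsilon$, so it holds for $\epsilon$-dependent perturbations. The assumption $\rho<1$ is what keeps $(1-\rho)^{\alpha-3}$ finite when $\alpha<3$.
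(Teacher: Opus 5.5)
Your proposal is correct and follows essentially the same route as the paper's proof: you apply a pointwise third-order Taylor expansion of $(1+t)^\alpha$ with Lagrange remainder, and the linear term cancels and the quadratic term becomes $A_\alpha r_u^2$ via \eqref{weighted-moments}. You then bound the third derivative by its endpoint maximum on $[-\rho,\rho]$ and use $|h_u(x)|^3\le\|h_u\|_\infty h_u(x)^2$.
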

\begin{proof}
Fix $u$ and abbreviate $h=h_u$.  By \eqref{likelihood-coordinate},
\begin{align}
\bigl(\mathcal L^\alpha(X\!\to\!u)\bigr)^\alpha
=\sum_xP_X(x)(1+h(x))^\alpha.                            \label{lift-h}
\end{align}
Because $\|h\|_\infty\leq\rho<1$, all arguments $1+h(x)$ are strictly positive.  For $f(t)=(1+t)^\alpha$, Taylor's theorem at $t=0$ gives, for every $x$,
\begin{align}
f(h(x))
=1+\alpha h(x)+\frac{\alpha(\alpha-1)}2h(x)^2
+\frac{f^{(3)}(\xi_x)}6h(x)^3,                           \label{pointwise-taylor}
\end{align}
where $\xi_x$ lies between $0$ and $h(x)$.  The third derivative is
\begin{align*}
f^{(3)}(t)=\alpha(\alpha-1)(\alpha-2)(1+t)^{\alpha-3}.
\end{align*}
Since $|\xi_x|\leq\rho$,
\begin{align}
|f^{(3)}(\xi_x)|
&\leq |\alpha(\alpha-1)(\alpha-2)|
\max_{t\in[-\rho,\rho]}(1+t)^{\alpha-3}\notag\\
&=6K_{\alpha,\rho}.                                      \label{third-derivative-bound}
\end{align}
Substituting \eqref{pointwise-taylor} in \eqref{lift-h}, the constant terms sum to one and the linear terms vanish by \eqref{weighted-moments}.  The quadratic terms equal
$A_\alpha\sum_xP_X(x)h(x)^2=A_\alpha r_u^2$.  If $R_u$ denotes the sum of the third-order remainders, then \eqref{third-derivative-bound} yields
\begin{align*}
|R_u|
&\leq K_{\alpha,\rho}\sum_xP_X(x)|h(x)|^3\\
&\leq K_{\alpha,\rho}\|h\|_\infty
      \sum_xP_X(x)h(x)^2\\
&=K_{\alpha,\rho}\|h_u\|_\infty r_u^2.
\end{align*}
This proves \eqref{uniform}.  Importantly, the constants depend only on $\alpha$ and the fixed neighborhood radius $\rho$; they do not depend on $u$, $\epsilon$, or the direction of the perturbation.  The estimate is therefore uniform over any optimizing sequence $h_u=h_u(\epsilon)$ that stays inside the prescribed neighborhood.  For $\alpha=2$, $f^{(3)}\equiv0$, so $K_{2,\rho}=0$ and the expansion is exact.
\end{proof}

Let
%\begin{align}
$b_\epsilon\triangleq e^{\alpha\epsilon}-1,\
r_0^2(\epsilon)\triangleq\frac{b_\epsilon}{A_\alpha}.$
%\label{radius0}
%\end{align}
Ignoring $R_u$ in \eqref{uniform} gives the quadratic privacy surrogate $r_u^2\leq r_0^2(\epsilon)$. Let
$W\triangleq D_Y^{-1/2}P_{X|Y}^{-1}D_X^{1/2}$.

\begin{lemma}[Posterior-coordinate identities]\label{lem:coordinates}
For every $u$,
\begin{align}
P_{Y|U=u}-P_Y
&=P_{X|Y}^{-1}D_X^{1/2}Z_u,                              \label{posterior-inverse}\\
\chi^2(P_{Y|U=u}\|P_Y)&=\|WZ_u\|_2^2.                  \label{utility-coordinate}
\end{align}
Moreover,
\begin{align}
W\sqrt{P_X}=\sqrt{P_Y},\qquad
W^T\sqrt{P_Y}=\sqrt{P_X},                               \label{prior-singular-pair}
\end{align}
and $W$ maps the subspace orthogonal to $\sqrt{P_X}$ into the subspace orthogonal to $\sqrt{P_Y}$.  A right singular vector associated with $\sigma_{\max}(W)$ can therefore be chosen orthogonal to $\sqrt{P_X}$.
\end{lemma}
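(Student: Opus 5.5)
The plan is to start from the Markov chain $X-Y-U$, which gives $P_{X|U=u}=P_{X|Y}P_{Y|U=u}$, and likewise $P_X=P_{X|Y}P_Y$ by total probability. Subtracting these two identities yields $P_{X|U=u}-P_X=P_{X|Y}(P_{Y|U=u}-P_Y)$. By definition $P_{X|U=u}-P_X=D_X^{1/2}Z_u$, and $P_{X|Y}$ is invertible by assumption. Multiplying by $P_{X|Y}^{-1}$ then gives \eqref{posterior-inverse} directly.

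For \eqref{utility-coordinate}, write $\chi^2(P_{Y|U=u}\|P_Y)=\|D_Y^{-1/2}(P_{Y|U=u}-P_Y)\|_2^2$, which is just the definition of the Pearson divergence with $P_Y$ strictly positive. Substituting \eqref{posterior-inverse} gives $\|D_Y^{-1/2}P_{X|Y}^{-1}D_X^{1/2}Z_u\|_2^2=\|WZ_u\|_2^2$.

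For \eqref{prior-singular-pair}, use $D_X^{1/2}\sqrt{P_X}=P_X$. Then $W\sqrt{P_X}=D_Y^{-1/2}P_{X|Y}^{-1}P_X=D_Y^{-1/2}P_Y=\sqrt{P_Y}$, since $P_{X|Y}P_Y=P_X$. For the transpose identity, $W^T\sqrt{P_Y}=D_X^{1/2}P_{X|Y}^{-T}D_Y^{-1/2}\sqrt{P_Y}=D_X^{1/2}P_{X|Y}^{-T}\mathbf 1$. The columns of $P_{X|Y}$ sum to one, so $P_{X|Y}^T\mathbf 1=\mathbf 1$, and hence $P_{X|Y}^{-T}\mathbf 1=\mathbf 1$. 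This gives $W^T\sqrt{P_Y}=D_X^{1/2}\mathbf 1=\sqrt{P_X}$. I expect this column-stochasticity step, and the use of the inverse transpose, to be the only point needing care. The subspace claim then follows: if $v\perp\sqrt{P_X}$, then $\langle Wv,\sqrt{P_Y}\rangle=\langle v,W^T\sqrt{P_Y}\rangle=\langle v,\sqrt{P_X}\rangle=0$.

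For the singular-vector claim, the two identities above show that $\sqrt{P_X}$ is an eigenvector of $W^TW$, since $W^TW\sqrt{P_X}=W^T\sqrt{P_Y}=\sqrt{P_X}$, with eigenvalue $1$ because both vectors have unit norm. The symmetric matrix $W^TW$ therefore leaves $\sqrt{P_X}^{\perp}$ invariant, and it has an orthonormal eigenbasis consisting of $\sqrt{P_X}$ together with a basis of $\sqrt{P_X}^\perp$. The top eigenvalue $\sigma_{\max}^2$ is attained by some vector in this basis. If $\sigma_{\max}>1$, that vector cannot be $\sqrt{P_X}$, so it lies in $\sqrt{P_X}^\perp$. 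If $\sigma_{\max}=1$, then every unit vector in $\sqrt{P_X}^\perp$ with $\|Wv\|\le1$ is compatible, and I would take the top eigenvector of the restriction of $W^TW$ to $\sqrt{P_X}^\perp$. Since $\sigma_{\max}\ge1$, I will argue that its eigenvalue equals $\sigma_{\max}^2$: if it were strictly smaller, the spectrum would be $\{1\}\cup\{<1\}$, making $\sigma_{\max}=1$ uniquely attained at $\sqrt{P_X}$. I will check this degenerate case separately. The statement says only that such a vector \emph{can} be chosen, so in the degenerate case I would instead fall back on the top right singular vector of $W$ restricted to $\sqrt{P_X}^\perp$, and note that this is the vector actually used later.
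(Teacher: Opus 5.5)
Your derivations of the posterior-inverse identity, the $\chi^2$ coordinate formula, the prior singular pair, and the subspace-mapping claim are correct and match the paper. The gap is in the final singular-vector claim.

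You correctly observe that $\sqrt{P_X}$ is an eigenvector of $W^TW$ with eigenvalue $1$ and that $\sqrt{P_X}^\perp$ is invariant. Hence $\sigma_{\max}^2(W)=\max\{1,\lambda_\perp\}$, where $\lambda_\perp$ is the largest eigenvalue of $W^TW$ on $\sqrt{P_X}^\perp$. The claim therefore hinges on $\lambda_\perp\geq1$, and that is exactly the case you leave open.

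It cannot be closed using only the two prior identities. Take $W=\sqrt{P_Y}\sqrt{P_X}^T+\tfrac12Q$, where $Q$ annihilates $\sqrt{P_X}$ and maps $\sqrt{P_X}^\perp$ isometrically onto $\sqrt{P_Y}^\perp$. This $W$ is invertible and satisfies both identities. It has $W^TW=\sqrt{P_X}\sqrt{P_X}^T+\tfrac14(I-\sqrt{P_X}\sqrt{P_X}^T)$, so $\pm\sqrt{P_X}$ are its only unit top right singular vectors.

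Your fallback, taking the top singular vector of $W$ restricted to $\sqrt{P_X}^\perp$, does not prove the statement. In that scenario the vector is not associated with $\sigma_{\max}(W)$. It would only support the later results if $\sigma_{\max}(W)$ were replaced throughout by the restricted singular value.

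The missing idea is that $P_{X|Y}$ is entrywise nonnegative and column stochastic, which forces every singular value of $W$ to be at least one. As in the paper, set $A=P_{X|Y}$ and $B\triangleq W^{-1}=D_X^{-1/2}AD_Y^{1/2}$. Since $P_X(x)=\sum_yA_{xy}P_Y(y)$, Cauchy--Schwarz gives, for every $a$,
\begin{align*}
\|Ba\|_2^2=\sum_x\frac{\bigl(\sum_yA_{xy}\sqrt{P_Y(y)}\,a_y\bigr)^2}{P_X(x)}
\leq\sum_{x,y}A_{xy}a_y^2=\|a\|_2^2 .
\end{align*}
Hence $\|v\|_2=\|BWv\|_2\leq\|Wv\|_2$ for all $v$, so $\lambda_\perp\geq1$; the complement is nontrivial because $\mathcal K\geq2$. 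This gives $\lambda_\perp=\sigma_{\max}^2(W)$. A top eigenvector of $W^TW$ on $\sqrt{P_X}^\perp$ is then a right singular vector for $\sigma_{\max}(W)$, including in the tied case $\sigma_{\max}(W)=1$.
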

\begin{proof}
The Markov chain gives $P_{X|U=u}=P_{X|Y}P_{Y|U=u}$, while $P_X=P_{X|Y}P_Y$.  Subtracting these identities and using invertibility gives
\begin{align*}
P_{Y|U=u}-P_Y
&=P_{X|Y}^{-1}(P_{X|U=u}-P_X)\\
&=P_{X|Y}^{-1}D_X^{1/2}Z_u,
\end{align*}
which is \eqref{posterior-inverse}.  Consequently,
\begin{align*}
\chi^2(P_{Y|U=u}\|P_Y)
&=(P_{Y|U=u}-P_Y)^TD_Y^{-1}(P_{Y|U=u}-P_Y)\\
&=\|D_Y^{-1/2}P_{X|Y}^{-1}D_X^{1/2}Z_u\|_2^2
=\|WZ_u\|_2^2.
\end{align*}

For the first identity in \eqref{prior-singular-pair}, use
$D_X^{1/2}\sqrt{P_X}=P_X$, $P_{X|Y}^{-1}P_X=P_Y$, and
$D_Y^{-1/2}P_Y=\sqrt{P_Y}$.  For the second, column stochasticity gives
$P_{X|Y}^T\bm1=\bm1$, hence $P_{X|Y}^{-T}\bm1=\bm1$, and therefore
\begin{align*}
W^T\sqrt{P_Y}
=D_X^{1/2}P_{X|Y}^{-T}D_Y^{-1/2}\sqrt{P_Y}
=D_X^{1/2}\bm1=\sqrt{P_X}.
\end{align*}
Thus, if $z\perp\sqrt{P_X}$, then
$\langle Wz,\sqrt{P_Y}\rangle
=\langle z,W^T\sqrt{P_Y}\rangle=0$.

Finally, let $B\triangleq D_X^{-1/2}P_{X|Y}D_Y^{1/2}=W^{-1}$.  For every $a\perp\sqrt{P_Y}$, the vector $q=P_Y+D_Y^{1/2}ta$ is a probability distribution for sufficiently small $|t|$.  For completeness, put $a_y=a(y)$ and $A=P_{X|Y}$. Since $P_X(x)=\sum_y A_{xy}P_Y(y)$, Cauchy--Schwarz gives
\begin{align*}
\|Ba\|_2^2
&=\sum_x\frac{\bigl(\sum_y A_{xy}\sqrt{P_Y(y)}a_y\bigr)^2}{P_X(x)}\\
&\leq\sum_{x,y}A_{xy}a_y^2=\sum_y a_y^2,
\end{align*}
where the last identity uses column stochasticity. Thus $\|Ba\|_2\leq\|a\|_2$ directly.  Hence every singular value of $B$ on the centered subspace is at most one, and every singular value of $W$ on the corresponding centered subspace is at least one.  Equation \eqref{prior-singular-pair} supplies a singular value equal to one in the prior direction.  Therefore the largest singular value of $W$ is attained on the centered subspace (with a centered maximizer also available in the tied case).
\end{proof}

Using Lemma~\ref{lem:coordinates}, the surrogate problem is
\begin{align}\label{tt}
\max_{\substack{P_U,\{Z_u\}:\ \sum_uP_U(u)Z_u=0,\\
\|Z_u\|_2^2\leq r_0^2(\epsilon),\ Z_u\perp\sqrt{P_X}}}
\sum_uP_U(u)\|WZ_u\|_2^2.
\end{align}
The quadratic objective in \eqref{tt} equals $\chi^2(Y;U)$ whenever the induced posteriors are valid. The surrogate omits posterior nonnegativity; Theorem~\ref{thm:feasible} establishes validity for the scaled optimizer.

Let $v^\star$ be a unit dominant right singular vector of $W$ chosen, as permitted by Lemma~\ref{lem:coordinates}, so that $v^\star\perp\sqrt{P_X}$.
\begin{proposition}\label{prop:surrogate}
A binary, equiprobable release is optimal for \eqref{tt}, and its value is
\begin{align}
V_0(\epsilon,\alpha)
=r_0^2(\epsilon)\sigma_{\max}^2(W)
=\frac{2(e^{\alpha\epsilon}-1)}{\alpha(\alpha-1)}\sigma_{\max}^2(W).\label{surrogatevalue}
\end{align}
It is attained in the surrogate problem by $Z_{u_1}=-Z_{u_2}=r_0v^\star$.
\end{proposition}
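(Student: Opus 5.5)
The plan is to prove a matching upper bound and then exhibit a feasible point that attains it. For the upper bound, fix any feasible collection $(P_U,\{Z_u\})$ of \eqref{tt}. Each $Z_u$ is orthogonal to $\sqrt{P_X}$. By the definition of the operator norm, $\|WZ_u\|_2^2\leq\sigma_{\max}^2(W)\|Z_u\|_2^2\leq\sigma_{\max}^2(W)\,r_0^2(\epsilon)$. Averaging with the weights $P_U(u)$, which sum to one, gives
\begin{align*}
\sum_uP_U(u)\|WZ_u\|_2^2\leq r_0^2(\epsilon)\sigma_{\max}^2(W).
\end{align*}
This bound uses neither the barycenter constraint nor the orthogonality constraint, so it is valid for every feasible point. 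Substituting $r_0^2(\epsilon)=b_\epsilon/A_\alpha=2(e^{\alpha\epsilon}-1)/(\alpha(\alpha-1))$ gives the closed form in \eqref{surrogatevalue}.

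For achievability, take $\mathcal U=\{u_1,u_2\}$ with $P_U(u_1)=P_U(u_2)=\tfrac12$, and set $Z_{u_1}=r_0v^\star$ and $Z_{u_2}=-r_0v^\star$. I will check the constraints of \eqref{tt} one at a time.
\begin{itemize}
\item \emph{Barycenter:} $\tfrac12 r_0v^\star-\tfrac12 r_0v^\star=0$.
\item \emph{Norm:} $\|Z_{u_i}\|_2^2=r_0^2\|v^\star\|_2^2=r_0^2$.
\item \emph{Orthogonality:} $Z_{u_i}\perp\sqrt{P_X}$ holds because $v^\star$ was chosen orthogonal to $\sqrt{P_X}$, which Lemma~\ref{lem:coordinates} permits.
\end{itemize}
Since $v^\star$ is a unit dominant right singular vector, $\|Wv^\star\|_2=\sigma_{\max}(W)$. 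Hence each term equals $r_0^2\sigma_{\max}^2(W)$, and so does their average, which matches the upper bound. So the binary equiprobable release is optimal and $V_0$ is the value.

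The only delicate point is whether the dominant singular direction is compatible with the constraint $Z_u\perp\sqrt{P_X}$. If it were not, the unconstrained operator-norm bound would be loose and the optimum would instead be the largest singular value restricted to the centered subspace. This is exactly what the last part of Lemma~\ref{lem:coordinates} resolves, so I will cite it rather than reprove it. The argument runs as follows:
\begin{itemize}
\item $W$ maps $\sqrt{P_X}$ to $\sqrt{P_Y}$ with singular value one.
\item $W$ maps the centered subspace into the centered subspace.
\item $W$ has all singular values on the centered subspace at least one.
\end{itemize}
Therefore $\sigma_{\max}(W)$ is attained by a centered unit vector.

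Finally, I note that the surrogate ignores nonnegativity of the induced posteriors. Optimality here is claimed only for \eqref{tt} as stated, and validity of the actual posteriors is deferred to Theorem~\ref{thm:feasible}.
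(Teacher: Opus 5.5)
Your proof is correct and follows the paper's argument essentially verbatim. You bound $\sum_uP_U(u)\|WZ_u\|_2^2\leq r_0^2(\epsilon)\sigma_{\max}^2(W)$ for every feasible point by the operator norm, and you match it with the equiprobable pair $Z_{u_{1,2}}=\pm r_0v^\star$, whose centered choice of $v^\star$ is supplied by Lemma~\ref{lem:coordinates}.
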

\begin{proof}
For any feasible $\{P_U(u),Z_u\}$,
\begin{align}
\sum_uP_U(u)\|WZ_u\|_2^2
&\leq\sigma_{\max}^2(W)\sum_uP_U(u)\|Z_u\|_2^2\notag\\
&\leq\sigma_{\max}^2(W)\sum_uP_U(u)r_0^2(\epsilon)\notag\\
&=r_0^2(\epsilon)\sigma_{\max}^2(W).                    \label{surrogate-upper}
\end{align}
Now take two output symbols with probabilities $1/2$ and set
$Z_{u_1}=r_0v^\star$ and $Z_{u_2}=-r_0v^\star$.  The centeredness of $v^\star$ gives $Z_{u_i}\perp\sqrt{P_X}$, both norm constraints are active, and
\begin{align*}
\tfrac12Z_{u_1}+\tfrac12Z_{u_2}=0.
\end{align*}
Thus all constraints of \eqref{tt} hold.  Since
$\|Wv^\star\|_2=\sigma_{\max}(W)$, its objective value is
\begin{align*}
\tfrac12r_0^2\|Wv^\star\|_2^2
+\tfrac12r_0^2\|Wv^\star\|_2^2
=r_0^2\sigma_{\max}^2(W),
\end{align*}
which meets \eqref{surrogate-upper}.  Substituting
$r_0^2=(e^{\alpha\epsilon}-1)/A_\alpha$ and
$A_\alpha=\alpha(\alpha-1)/2$ gives \eqref{surrogatevalue}.
\end{proof}

We next give an explicit exactly feasible mechanism and bound its gap to the original problem. Let $p_{\min}\triangleq\min_xP_X(x)$ and define
\begin{align*}
\kappa_\star&\triangleq\|D_X^{-1/2}v^\star\|_\infty,\
d^\star\triangleq P_{X|Y}^{-1}D_X^{1/2}v^\star,\\
r_Y^\star&\triangleq\min_{y:d^\star(y)\neq0}
\frac{P_Y(y)}{|d^\star(y)|},\
r_{\rm val}^\star\triangleq\min\{\kappa_\star^{-1},r_Y^\star\}.
\end{align*}
Further, set $\delta_0(\epsilon)\triangleq\kappa_\star r_0(\epsilon)$ and
\[
\bar\delta_\epsilon\triangleq
\begin{cases}
\sqrt{\bigl(e^{\alpha\epsilon/(\alpha-1)}-1\bigr)/p_{\min}},&\alpha\geq2,\\
p_{\min}^{-1}\sqrt{2\alpha\epsilon/(\alpha-1)},&1<\alpha<2.
\end{cases}
\]
For $K_{\alpha,\rho}>0$, let
$c_\rho\triangleq\min\{\rho,A_\alpha/(2K_{\alpha,\rho})\}$; when $K_{\alpha,\rho}=0$, let $c_\rho=\rho$. With
$m_\rho\triangleq\min\{\rho/\kappa_\star,r_Y^\star\}$, set
\begin{align}
\epsilon_{\rm dir}&\triangleq
\frac1\alpha\log(1+A_\alpha m_\rho^2),\nonumber\\
\epsilon_{\rm rem}&\triangleq
\begin{cases}
\dfrac{\alpha-1}{\alpha}\log(1+p_{\min}c_\rho^2),&\alpha\geq2,\\[1mm]
\dfrac{(\alpha-1)p_{\min}^2c_\rho^2}{2\alpha},&1<\alpha<2,
\end{cases}\nonumber\\[-1mm]
\epsilon_{\max}(\rho)&\triangleq
\min\{\epsilon_{\rm dir},\epsilon_{\rm rem}\}.
\label{epsilonmax}
\end{align}

\begin{theorem}[Exact feasibility and gap]\label{thm:feasible}
Fix an auxiliary radius $\rho\in(0,1)$. Suppose $0<\epsilon\leq\epsilon_{\max}(\rho)$ and define
\begin{align}
r_-^2(\epsilon;\rho)\triangleq
\frac{b_\epsilon}{A_\alpha+K_{\alpha,\rho}\delta_0(\epsilon)}.\label{saferadius}
\end{align}
The binary mechanism with $P_U(u_1)=P_U(u_2)=1/2$ and
\begin{align}
P_{X|U=u_{1,2}}&=P_X\pm r_-D_X^{1/2}v^\star,\label{safex}\\
P_{Y|U=u_{1,2}}&=P_Y\pm r_-d^\star\label{safey}
\end{align}
is a valid channel and satisfies the original $\alpha$-lift constraint exactly. If $V_{\chi^2}(\epsilon,\alpha)$ denotes the optimum of \eqref{problem}, then
\begin{align}
\frac{b_\epsilon\sigma_{\max}^2(W)}{A_\alpha+K_{\alpha,\rho}\delta_0}
\leq V_{\chi^2}(\epsilon,\alpha)
\leq\frac{b_\epsilon\sigma_{\max}^2(W)}{A_\alpha-K_{\alpha,\rho}\bar\delta_\epsilon}.
\label{sandwich}
\end{align}
Consequently, with
\begin{align*}
G_-\triangleq
\frac{K_{\alpha,\rho}\delta_0}
{A_\alpha(A_\alpha+K_{\alpha,\rho}\delta_0)},\
G_+\triangleq
\frac{K_{\alpha,\rho}\bar\delta_\epsilon}
{A_\alpha(A_\alpha-K_{\alpha,\rho}\bar\delta_\epsilon)},
\end{align*}
we have
\begin{align}
|V_{\chi^2}-V_0|\leq b_\epsilon\sigma_{\max}^2(W)
\max\{G_-,G_+\}.\label{gap}
\end{align}
For fixed $P_{XY},\alpha,\rho$, the absolute gap is $O(\epsilon^{3/2})$ and $|V_{\chi^2}-V_0|/V_0=O(\sqrt\epsilon)$ as $\epsilon\downarrow0$.
\end{theorem}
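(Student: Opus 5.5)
The proof has three parts: feasibility of the scaled binary mechanism, a lower bound on $V_{\chi^2}$ from that mechanism, and an upper bound from Lemma~\ref{lem:uniform} applied to an arbitrary feasible mechanism. The gap estimate then follows by algebra.

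\emph{Validity.} First I check that $r_-\leq r_0$, which holds because $K_{\alpha,\rho}\delta_0\geq0$. Next I check that $r_0(\epsilon)\leq m_\rho$. Since $\epsilon\leq\epsilon_{\rm dir}$, we get $b_\epsilon\leq A_\alpha m_\rho^2$, i.e.\ $r_0^2\leq m_\rho^2$. Consequently $r_-\kappa_\star\leq\rho<1$, so $h_{u_{1,2}}=\pm r_-D_X^{-1/2}v^\star$ satisfies $\|h_u\|_\infty\leq r_-\kappa_\star\leq\rho$. In particular \eqref{safex} is a nonnegative vector. It sums to one because $v^\star\perp\sqrt{P_X}$.

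By Lemma~\ref{lem:coordinates}, \eqref{safey} is the corresponding $Y$-posterior. It is nonnegative because $r_-\leq r_Y^\star$, and it sums to one because $\mathbf 1^Td^\star=\mathbf 1^TP_{X|Y}^{-1}D_X^{1/2}v^\star=\sqrt{P_X}^Tv^\star=0$, using column stochasticity. The two posteriors average to $P_Y$, so \eqref{bayes-channel} gives a valid channel.

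\emph{Privacy and lower bound.} Here $r_u=r_-$. By Lemma~\ref{lem:uniform},
\[
(\mathcal L^\alpha)^\alpha\leq 1+A_\alpha r_-^2+K_{\alpha,\rho}\,r_-\kappa_\star\,r_-^2\leq 1+r_-^2\bigl(A_\alpha+K_{\alpha,\rho}\delta_0\bigr)=1+b_\epsilon=e^{\alpha\epsilon},
\]
where the middle step uses $r_-\kappa_\star\leq\delta_0$. The utility is exact by \eqref{utility-coordinate}: $\chi^2(Y;U)=r_-^2\sigma_{\max}^2(W)$. This gives the left side of \eqref{sandwich}.

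\emph{Upper bound (main obstacle).} Take any feasible mechanism. I need an a priori bound $\|h_u\|_\infty\leq\bar\delta_\epsilon$ with $\bar\delta_\epsilon\leq c_\rho\leq\rho$, so that Lemma~\ref{lem:uniform} applies and $A_\alpha-K_{\alpha,\rho}\bar\delta_\epsilon\geq A_\alpha/2>0$. The tool is the Rényi identity: the constraint is $D_\alpha(P_{X|U=u}\|P_X)\leq\alpha\epsilon/(\alpha-1)$.

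For $\alpha\geq2$, monotonicity of $D_\alpha$ in $\alpha$ gives
\[
\log(1+r_u^2)=D_2\leq D_\alpha\leq\frac{\alpha\epsilon}{\alpha-1},
\]
so $r_u^2\leq e^{\alpha\epsilon/(\alpha-1)}-1$. Then $p_{\min}h(x)^2\leq\sum_xP_X(x)h(x)^2=r_u^2$ yields $\|h\|_\infty\leq\bar\delta_\epsilon$.

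For $1<\alpha<2$, I use $D_\alpha\geq D_1$ and Pinsker's inequality: $\|P_{X|U}-P_X\|_1\leq\sqrt{2\alpha\epsilon/(\alpha-1)}$. Then $|h(x)|\leq\|P_{X|U}-P_X\|_1/p_{\min}$ gives $\|h\|_\infty\leq\bar\delta_\epsilon$.

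Next, $\epsilon\leq\epsilon_{\rm rem}$ is exactly the condition $\bar\delta_\epsilon\leq c_\rho$. Lemma~\ref{lem:uniform} then gives
\[
e^{\alpha\epsilon}\geq1+\bigl(A_\alpha-K_{\alpha,\rho}\bar\delta_\epsilon\bigr)r_u^2,
\]
so $r_u^2\leq b_\epsilon/(A_\alpha-K_{\alpha,\rho}\bar\delta_\epsilon)$. Combining this with
\[
\chi^2(Y;U)=\sum_uP_U(u)\|WZ_u\|^2\leq\sigma_{\max}^2\sum_uP_U(u)r_u^2
\]
gives the right side of \eqref{sandwich}.

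\emph{Gap.} Subtract $V_0=b_\epsilon\sigma_{\max}^2/A_\alpha$ from each side of \eqref{sandwich}. The differences are exactly $b_\epsilon\sigma_{\max}^2G_\mp$, which gives \eqref{gap}.

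For the rates, as $\epsilon\downarrow0$ we have $b_\epsilon=O(\epsilon)$, $\delta_0=\kappa_\star r_0=O(\sqrt\epsilon)$, and $\bar\delta_\epsilon=O(\sqrt\epsilon)$. Hence $G_\pm=O(\sqrt\epsilon)$, the absolute gap is $O(\epsilon^{3/2})$, and the relative gap is $O(\sqrt\epsilon)$.
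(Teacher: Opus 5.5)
Your proposal is correct and follows essentially the same route as the paper's proof. You use $\epsilon\leq\epsilon_{\rm dir}$ for posterior validity and the upper side of Lemma~\ref{lem:uniform} for exact feasibility at radius $r_-$. You then use the R\'enyi identity with $D_2\leq D_\alpha$ for $\alpha\geq2$, or Pinsker for $1<\alpha<2$, to get $\|h_u\|_\infty\leq\bar\delta_\epsilon\leq c_\rho$, and the lower side of Lemma~\ref{lem:uniform} with the singular-value bound for the outer bound, followed by the same gap algebra and rates.
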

\begin{proof}
We divide the argument into five steps.

\emph{1) Consequences of the explicit budget range.}
The inequality $\epsilon\leq\epsilon_{\rm dir}$ gives
\begin{align*}
b_\epsilon=e^{\alpha\epsilon}-1\leq A_\alpha m_\rho^2,
\end{align*}
and hence
\begin{align}
r_0(\epsilon)&\leq m_\rho
\leq\frac{\rho}{\kappa_\star},\notag\\
r_0(\epsilon)&\leq r_Y^\star.                             \label{direction-range-proof}
\end{align}
Thus $\delta_0=\kappa_\star r_0\leq\rho$.  Next consider the remainder range.  If $\alpha\geq2$, then
\begin{align*}
\epsilon\leq\epsilon_{\rm rem}
\quad\Longrightarrow\quad
e^{\alpha\epsilon/(\alpha-1)}-1\leq p_{\min}c_\rho^2,
\end{align*}
so $\bar\delta_\epsilon\leq c_\rho$.  If $1<\alpha<2$, then
\begin{align*}
\epsilon\leq\epsilon_{\rm rem}
\quad\Longrightarrow\quad
\frac{1}{p_{\min}}\sqrt{\frac{2\alpha\epsilon}{\alpha-1}}
\leq c_\rho,
\end{align*}
which gives the same conclusion.  Therefore, in both cases,
\begin{align}
\bar\delta_\epsilon&\leq c_\rho\leq\rho,\notag\\
K_{\alpha,\rho}\bar\delta_\epsilon&\leq\frac{A_\alpha}{2},             \label{remainder-range-proof}
\end{align}
where the last inequality is automatic if $K_{\alpha,\rho}=0$.  In particular,
$A_\alpha-K_{\alpha,\rho}\bar\delta_\epsilon\geq A_\alpha/2>0$.

\emph{2) Validity of the proposed posteriors.}
Since the denominator in \eqref{saferadius} is at least $A_\alpha$, we have $r_-\leq r_0$.  For the two signs in \eqref{safex},
\begin{align*}
P_{X|U=u_{1,2}}(x)
=P_X(x)\left(1\pm r_-\frac{v^\star(x)}{\sqrt{P_X(x)}}\right).
\end{align*}
The absolute value of the perturbation in parentheses is at most
$r_-\kappa_\star\leq\delta_0\leq\rho<1$, so these vectors are nonnegative.  They sum to one because
$\langle v^\star,\sqrt{P_X}\rangle=0$.  Likewise,
$r_-\leq r_0\leq r_Y^\star$ ensures that the two vectors in \eqref{safey} are nonnegative.  They are normalized because
\begin{align*}
\bm1^Td^\star
=\bm1^TP_{X|Y}^{-1}D_X^{1/2}v^\star
=\bm1^TD_X^{1/2}v^\star
=\langle\sqrt{P_X},v^\star\rangle=0,
\end{align*}
where $\bm1^TP_{X|Y}^{-1}=\bm1^T$ follows from column stochasticity.  Finally,
\begin{align*}
P_{X|Y}(P_Y\pm r_-d^\star)
=P_X\pm r_-D_X^{1/2}v^\star,
\end{align*}
so the $X$- and $Y$-posteriors are consistent with the Markov chain.  The equiprobable average of either pair equals its prior.  Therefore \eqref{bayes-channel} constructs a valid binary channel $P_{U|Y}$ having precisely these posteriors.

\emph{3) Exact feasibility and the achievable lower bound.}
For the proposed $X$-posteriors,
\begin{align*}
h_{u_{1,2}}&=\pm r_-D_X^{-1/2}v^\star,\\
\|h_{u_i}\|_\infty&=r_-\kappa_\star\leq\delta_0\leq\rho.
\end{align*}
Lemma~\ref{lem:uniform} and the upper sign of its remainder estimate give
\begin{align*}
\bigl(\mathcal L^\alpha(X\!\to\!u_i)\bigr)^\alpha
&\leq1+A_\alpha r_-^2
  +K_{\alpha,\rho}\|h_{u_i}\|_\infty r_-^2\\
&\leq1+(A_\alpha+K_{\alpha,\rho}\delta_0)r_-^2\\
&=1+b_\epsilon=e^{\alpha\epsilon}.
\end{align*}
Thus the original, unexpanded privacy constraint holds for both outputs.  By \eqref{utility-coordinate} and the singular-vector property,
\begin{align*}
\chi^2(Y;U)
&=\tfrac12\|Wr_-v^\star\|_2^2
 +\tfrac12\|W(-r_-v^\star)\|_2^2\\
&=r_-^2\sigma_{\max}^2(W),
\end{align*}
which is the lower bound in \eqref{sandwich}.

\emph{4) A uniform outer bound for every exactly feasible mechanism.}
Fix any output $u$ of any mechanism feasible for \eqref{problem}.  The lift--R\'enyi identity gives
\begin{align}
D_\alpha(P_{X|U=u}\|P_X)\leq\frac{\alpha}{\alpha-1}\epsilon.           \label{renyi-from-lift-proof}
\end{align}
If $\alpha\geq2$, monotonicity of R\'enyi divergence in its order \cite[Theorem~3]{vanerven2014} implies
$D_2\leq D_\alpha$.  Moreover,
\begin{align*}
D_2(P_{X|U=u}\|P_X)
&=\log\sum_x\frac{P_{X|U}(x|u)^2}{P_X(x)}\\
&=\log(1+r_u^2).
\end{align*}
Combining this identity with \eqref{renyi-from-lift-proof} yields
\begin{align*}
r_u^2\leq e^{\alpha\epsilon/(\alpha-1)}-1,
\end{align*}
and hence, for every $x$,
\begin{align*}
|h_u(x)|=\frac{|Z_u(x)|}{\sqrt{P_X(x)}}
\leq\frac{r_u}{\sqrt{p_{\min}}}
\leq\bar\delta_\epsilon.
\end{align*}
If $1<\alpha<2$, monotonicity instead gives
$D(P_{X|U=u}\|P_X)\leq D_\alpha(P_{X|U=u}\|P_X)$, where $D$ is relative entropy.  Pinsker's inequality \cite[Theorem~31]{vanerven2014} (with natural logarithms) and \eqref{renyi-from-lift-proof} then give
\begin{align*}
\|P_{X|U=u}-P_X\|_1
\leq\sqrt{2D(P_{X|U=u}\|P_X)}
\leq\sqrt{\frac{2\alpha\epsilon}{\alpha-1}}.
\end{align*}
Therefore
\begin{align*}
\|h_u\|_\infty
\leq\frac{\|P_{X|U=u}-P_X\|_1}{p_{\min}}
\leq\bar\delta_\epsilon.
\end{align*}
Together with \eqref{remainder-range-proof}, both cases show that
$\|h_u\|_\infty\leq\bar\delta_\epsilon\leq\rho$.  The lower side of Lemma~\ref{lem:uniform} now gives
\begin{align*}
e^{\alpha\epsilon}
&\geq\bigl(\mathcal L^\alpha(X\!\to\!u)\bigr)^\alpha\\
&\geq1+\bigl(A_\alpha-K_{\alpha,\rho}\|h_u\|_\infty\bigr)r_u^2\\
&\geq1+\bigl(A_\alpha-K_{\alpha,\rho}\bar\delta_\epsilon\bigr)r_u^2.
\end{align*}
Thus
\begin{align}
r_u^2\leq
\frac{b_\epsilon}{A_\alpha-K_{\alpha,\rho}\bar\delta_\epsilon}.       \label{exact-radius-outer}
\end{align}
Using \eqref{utility-coordinate}, the singular-value inequality, and $\sum_uP_U(u)=1$, we obtain
\begin{align*}
\chi^2(Y;U)
&=\sum_uP_U(u)\|WZ_u\|_2^2\\
&\leq\sigma_{\max}^2(W)\sum_uP_U(u)r_u^2\\
&\leq\frac{b_\epsilon\sigma_{\max}^2(W)}
{A_\alpha-K_{\alpha,\rho}\bar\delta_\epsilon}.
\end{align*}
Taking the maximum over feasible mechanisms proves the upper bound in \eqref{sandwich}.

\emph{5) Gap and asymptotic orders.}
The surrogate value is
$V_0=b_\epsilon\sigma_{\max}^2(W)/A_\alpha$.  Its difference from the achievable lower bound equals
\begin{align*}
b_\epsilon\sigma_{\max}^2(W)
\frac{K_{\alpha,\rho}\delta_0}
{A_\alpha(A_\alpha+K_{\alpha,\rho}\delta_0)}
=b_\epsilon\sigma_{\max}^2(W)G_-.
\end{align*}
The difference between the outer bound and $V_0$ is similarly
$b_\epsilon\sigma_{\max}^2(W)G_+$.  Since the exact optimum lies between those two bounds, \eqref{gap} follows.

Finally,
$b_\epsilon=\alpha\epsilon+O(\epsilon^2)$,
$r_0=O(\sqrt\epsilon)$, and $\delta_0=O(\sqrt\epsilon)$.  The definitions of $\bar\delta_\epsilon$ in both order regimes also give
$\bar\delta_\epsilon=O(\sqrt\epsilon)$.  Hence $G_-=O(\sqrt\epsilon)$ and $G_+=O(\sqrt\epsilon)$, so the absolute gap is $O(\epsilon^{3/2})$.  Since $V_0=\Theta(\epsilon)$ for $\sigma_{\max}(W)>0$, $|V_{\chi^2}-V_0|/V_0=O(\sqrt\epsilon)$. The lower bound is also $\Theta(\epsilon)$, so normalization by $V_{\chi^2}$ gives the same order. These orders are for fixed $P_{XY},\alpha,\rho$.

The actual suboptimality of the constructed channel should be distinguished from the surrogate-value error. If $V_-=r_-^2\sigma_{\max}^2(W)$, then
\begin{align}
0\leq V_{\chi^2}-V_-
\leq b_\epsilon\sigma_{\max}^2(W)(G_-+G_+)
=O(\epsilon^{3/2}).\label{mechanism-gap}
\end{align}
This follows by subtracting the achievable lower bound from the upper bound in \eqref{sandwich}.
\end{proof}

\subsection{Interpreting and selecting the auxiliary radius}\label{subsec:rho}
The radius $\rho$ is chosen for the proof; it is not an unknown parameter of the data model or the optimal channel. The exact gap
$\Delta(\epsilon,\alpha)=|V_{\chi^2}(\epsilon,\alpha)-V_0(\epsilon,\alpha)|$
is independent of $\rho$. However, its explicit certificate
\begin{align}
E_\rho(\epsilon,\alpha)
&\triangleq b_\epsilon\sigma_{\max}^2(W)
\max\{G_-(\rho),G_+(\rho)\},\label{rho-certificate}
\end{align}
the sufficient range $\epsilon_{\max}(\rho)$, and the constructed scale $r_-(\epsilon;\rho)$ generally depend on $\rho$. The latter dependence means that the actual suboptimality of the analytically scaled channel may also depend on $\rho$.

Fix a dominant centered direction $v^\star$ once for the given $P_{XY}$; when the dominant singular value is repeated, any deterministic choice in that centered eigenspace is allowed. Define
\begin{align*}
\mathcal R_\epsilon
&=\{\rho\in(0,1):\epsilon\leq\epsilon_{\max}(\rho)\},\\
\rho_\epsilon&=\max\{\delta_0(\epsilon),\bar\delta_\epsilon\}.
\end{align*}
Taking an infimum over $\mathcal R_\epsilon$ removes any arbitrary choice from the certificate. In fact, no numerical minimization is necessary.

\begin{proposition}[A computable choice of $\rho$]\label{prop:rho-choice}
For $\epsilon>0$, $\mathcal R_\epsilon$ is nonempty if and only if
\begin{align}
\rho_\epsilon&<1,\qquad r_0(\epsilon)\leq r_Y^\star,\notag\\
K_{\alpha,\rho_\epsilon}\bar\delta_\epsilon&\leq A_\alpha/2. \label{rho-test}
\end{align}
When these conditions hold, $\rho_\epsilon\in\mathcal R_\epsilon$ and
\begin{align}
\Delta(\epsilon,\alpha)
\leq E_{\rho_\epsilon}(\epsilon,\alpha)
=\min_{\rho\in\mathcal R_\epsilon}E_\rho(\epsilon,\alpha).
\label{rho-best}
\end{align}
Thus, after fixing the singular direction, the bound and the analytic mechanism can be evaluated from $P_{XY},\alpha,\epsilon$ alone.
\end{proposition}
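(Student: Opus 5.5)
The plan is to rewrite the single condition $\epsilon\leq\epsilon_{\max}(\rho)$ as explicit inequalities in $\rho$. Once that is done, $\mathcal R_\epsilon$ turns out to be an interval starting at $\rho_\epsilon$, and monotonicity of $K_{\alpha,\rho}$ in $\rho$ finishes both claims.

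\emph{Step 1: unpack the budget range.} Since $\log$ and $\exp$ are increasing, $\epsilon\leq\epsilon_{\rm dir}$ is equivalent to $b_\epsilon\leq A_\alpha m_\rho^2$, i.e.\ $r_0(\epsilon)\leq m_\rho=\min\{\rho/\kappa_\star,r_Y^\star\}$. This is in turn equivalent to the pair $\delta_0(\epsilon)\leq\rho$ and $r_0(\epsilon)\leq r_Y^\star$; here $\kappa_\star>0$ because $v^\star\neq0$.

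Likewise, in both order regimes, $\epsilon\leq\epsilon_{\rm rem}$ is equivalent to $\bar\delta_\epsilon\leq c_\rho$. This is the same algebra as in Step~1 of the proof of Theorem~\ref{thm:feasible}, read in both directions. By the definition of $c_\rho$, the condition $\bar\delta_\epsilon\leq c_\rho$ is the pair $\bar\delta_\epsilon\leq\rho$ and $K_{\alpha,\rho}\bar\delta_\epsilon\leq A_\alpha/2$; the second is vacuous when $K_{\alpha,\rho}=0$. Hence
\begin{align*}
\mathcal R_\epsilon=\{\rho\in(0,1):\ \rho\geq\rho_\epsilon,\ r_0(\epsilon)\leq r_Y^\star,\ K_{\alpha,\rho}\bar\delta_\epsilon\leq A_\alpha/2\}.
\end{align*}

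\emph{Step 2: monotonicity of $K_{\alpha,\rho}$.} I will check that $M_{\alpha,\rho}$, and hence $K_{\alpha,\rho}$, is nondecreasing in $\rho\in(0,1)$. If $\alpha>3$, the maximum is $(1+\rho)^{\alpha-3}$, which increases in $\rho$. If $\alpha<3$, the maximum is $(1-\rho)^{\alpha-3}$, which also increases in $\rho$ because the exponent is negative. If $\alpha=3$, then $M_{\alpha,\rho}=1$ is constant. Consequently the last constraint defining $\mathcal R_\epsilon$ holds on a down-set in $\rho$, while $\rho\geq\rho_\epsilon$ is an up-set.

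\emph{Step 3: the equivalence.}
\begin{itemize}
\item \emph{Necessity.} Suppose some $\rho\in\mathcal R_\epsilon$. Then $\rho_\epsilon\leq\rho<1$ and $r_0\leq r_Y^\star$. Monotonicity gives $K_{\alpha,\rho_\epsilon}\bar\delta_\epsilon\leq K_{\alpha,\rho}\bar\delta_\epsilon\leq A_\alpha/2$, which is \eqref{rho-test}.
\item \emph{Sufficiency.} Suppose \eqref{rho-test} holds. Since $\epsilon>0$ gives $\bar\delta_\epsilon>0$, we have $\rho_\epsilon\in(0,1)$, and $\rho_\epsilon$ satisfies every condition in the description of $\mathcal R_\epsilon$. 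Hence $\rho_\epsilon\in\mathcal R_\epsilon$, and in fact $\rho_\epsilon=\min\mathcal R_\epsilon$.
\end{itemize}

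\emph{Step 4: the certificate.} The bound $\Delta\leq E_{\rho}$ for each $\rho\in\mathcal R_\epsilon$ is \eqref{gap} of Theorem~\ref{thm:feasible}, applied at that $\rho$. For the minimization, note that $\delta_0$, $\bar\delta_\epsilon$, $A_\alpha$, $b_\epsilon$ and $\sigma_{\max}(W)$ do not depend on $\rho$; $\rho$ enters $G_\pm$ only through $K=K_{\alpha,\rho}$.
\begin{itemize}
\item $G_-=K\delta_0/\bigl(A_\alpha(A_\alpha+K\delta_0)\bigr)$ is increasing in $K$.
\item $G_+=K\bar\delta_\epsilon/\bigl(A_\alpha(A_\alpha-K\bar\delta_\epsilon)\bigr)$ is increasing in $K$ on the range $K\bar\delta_\epsilon\leq A_\alpha/2$ guaranteed on $\mathcal R_\epsilon$.
\end{itemize}
By Step~2, $E_\rho$ is therefore nondecreasing on $\mathcal R_\epsilon$, so its minimum is attained at the smallest element $\rho_\epsilon$. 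This gives \eqref{rho-best}.

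The main obstacle is Step~1. It requires confirming that each implication in the proof of Theorem~\ref{thm:feasible} is in fact an equivalence, in particular the $1<\alpha<2$ branch, where squaring the defining inequality is needed to invert it. The rest is monotonicity bookkeeping.
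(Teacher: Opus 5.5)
Your proposal is correct and follows the paper's proof essentially step for step. You unpack $\epsilon\le\epsilon_{\max}(\rho)$ into $\delta_0\le\rho<1$, $r_0\le r_Y^\star$, $\bar\delta_\epsilon\le\rho$, $K_{\alpha,\rho}\bar\delta_\epsilon\le A_\alpha/2$, use monotonicity of $K_{\alpha,\rho}$ in $\rho$ for necessity and sufficiency, and show $E_\rho$ depends on $\rho$ only through $K$ and is nondecreasing in it; your explicit check that $\rho_\epsilon>0$, so that $\rho_\epsilon\in(0,1)$, is a small detail the paper leaves implicit.
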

\begin{proof}
Inverting the logarithms and square roots in \eqref{epsilonmax} shows that membership in $\mathcal R_\epsilon$ is equivalent to
\begin{align}
\delta_0&\leq\rho<1,\qquad r_0\leq r_Y^\star,\notag\\
\bar\delta_\epsilon&\leq\rho,\qquad
K_{\alpha,\rho}\bar\delta_\epsilon\leq A_\alpha/2.
\label{rho-equivalent}
\end{align}
The last inequality is automatic for $K_{\alpha,\rho}=0$. Consequently, every admissible radius is at least $\rho_\epsilon$.

The constant $K_{\alpha,\rho}$ is non-decreasing in $\rho$. Indeed, for $1<\alpha<3$ the maximum defining $M_{\alpha,\rho}$ equals $(1-\rho)^{\alpha-3}$, which is non-decreasing; for $\alpha>3$ it equals $(1+\rho)^{\alpha-3}$, which is also non-decreasing. For $\alpha=2$ the prefactor vanishes, and for $\alpha=3$ the constant is one. If any $\rho$ satisfies \eqref{rho-equivalent}, replacing it by $\rho_\epsilon$ therefore preserves every inequality. This proves necessity in \eqref{rho-test}; sufficiency follows by direct substitution.

For fixed $\epsilon$, all dependence of $E_\rho$ on $\rho$ occurs through $K=K_{\alpha,\rho}$. On the admissible range the denominators are positive, and direct differentiation gives
\begin{align*}
\frac{d}{dK}\frac{K\delta_0}{A_\alpha(A_\alpha+K\delta_0)}
&=\frac{\delta_0}{(A_\alpha+K\delta_0)^2}\geq0,\\
\frac{d}{dK}\frac{K\bar\delta_\epsilon}{A_\alpha(A_\alpha-K\bar\delta_\epsilon)}
&=\frac{\bar\delta_\epsilon}{(A_\alpha-K\bar\delta_\epsilon)^2}\geq0.
\end{align*}
Hence $E_\rho$ is non-decreasing on $\mathcal R_\epsilon$, so the smallest admissible radius minimizes it. Applying Theorem~\ref{thm:feasible} at $\rho_\epsilon$ proves \eqref{rho-best}.
\end{proof}

These conditions are sufficient for the theorem's certificate; their failure does not imply that the original privacy problem is infeasible. A constant release is always feasible. Also, smaller $\rho$ does not automatically enlarge or shrink $\epsilon_{\max}(\rho)$ monotonically, since both neighborhood coverage and the remainder constant enter that minimum.

Although $\rho_\epsilon$ depends on $\epsilon$, the remainder remains uniform: $\rho_\epsilon=O(\sqrt\epsilon)$, and eventually $\rho_\epsilon\leq1/2$, so $K_{\alpha,\rho_\epsilon}\leq K_{\alpha,1/2}$. Therefore the same $O(\epsilon^{3/2})$ absolute and $O(\sqrt\epsilon)$ relative orders hold for this data-dependent choice. %For example, in the medical model presented in Example \ref{ex:medical} Sec. \ref{subsec:numerical}, at $\alpha=3$ and $\epsilon=0.02$, $\delta_0=0.2845$, $\bar\delta_\epsilon=0.3902$, and $\rho_\epsilon=0.3902$ is admissible.
For example, for the medical model presented in Example \ref{ex:medical} in Sec.  \ref{subsec:numerical}, when $\alpha=3$ and $\epsilon=0.02$, we have $\delta_0=0.2845$ and $\bar{\delta}\epsilon=0.3902$, and $\rho_\epsilon=0.3902$ is admissible.

Two special cases clarify the dependence. For $\alpha=2$, $K_{2,\rho}=0$ and the gap vanishes on the posterior-validity range of Corollary~\ref{cor:alpha2}. For $\alpha=3$, $K_{3,\rho}=1$ for every $\rho$, so the finite-budget certificate and analytic radius are independent of $\rho$ once admissibility holds. The sufficient range can still depend on the selected radius.

\begin{corollary}\label{cor:alpha2}
For $\alpha=2$, $K_{2,\rho}=0$ and the quadratic privacy relation is exact. Whenever
$\epsilon\leq\tfrac12\log(1+(r_{\rm val}^\star)^2)$,
$
V_{\chi^2}(\epsilon,2)=(e^{2\epsilon}-1)\sigma_{\max}^2(W),
$
attained by \eqref{safex}--\eqref{safey} with $r_-^2=e^{2\epsilon}-1$.
\end{corollary}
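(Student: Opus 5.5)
Since $\alpha=2$ makes the third derivative of $f(t)=(1+t)^2$ vanish identically, Lemma~\ref{lem:uniform} gives the exact identity $\bigl(\mathcal L^2(X\!\to\!u)\bigr)^2=1+r_u^2$ for every output $u$. We also have $A_2=1$. This identity needs no neighborhood assumption: expanding $\sum_xP_X(x)(1+h_u(x))^2$ directly and using \eqref{weighted-moments} yields $1+r_u^2$ for any valid posterior, so we never need to invoke the radius $\rho$. The constraint \eqref{local1} at $\alpha=2$ is therefore exactly $r_u^2\leq e^{2\epsilon}-1=r_0^2(\epsilon)$ for all $u$.

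\emph{Upper bound.} For any feasible mechanism, \eqref{utility-coordinate} together with the singular-value inequality gives
\[
\chi^2(Y;U)=\sum_uP_U(u)\|WZ_u\|_2^2\leq\sigma_{\max}^2(W)\sum_uP_U(u)r_u^2\leq(e^{2\epsilon}-1)\sigma_{\max}^2(W).
\]
This is the surrogate bound \eqref{surrogate-upper}, and here it holds for the original problem because the relaxation is exact.

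\emph{Achievability.} Take the binary equiprobable mechanism \eqref{safex}--\eqref{safey} with $r_-^2=e^{2\epsilon}-1=r_0^2$; this is \eqref{saferadius} with $K_{2,\rho}=0$. The validity argument of step 2 in the proof of Theorem~\ref{thm:feasible} requires only $r_0\kappa_\star\leq1$, for nonnegativity of the $X$-posteriors, and $r_0\leq r_Y^\star$, for nonnegativity of the $Y$-posteriors. Both follow from $r_0\leq r_{\rm val}^\star$, which is exactly the hypothesis $\epsilon\leq\tfrac12\log(1+(r_{\rm val}^\star)^2)$. Normalization, Markov consistency and the barycenter identity carry over verbatim from that step, so \eqref{bayes-channel} produces a valid channel. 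The exact identity gives $\bigl(\mathcal L^2\bigr)^2=1+r_0^2=e^{2\epsilon}$, so the privacy constraint holds with equality. The utility is $r_0^2\sigma_{\max}^2(W)$ as in step 3, which matches the upper bound.

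The main point requiring care is the boundary case $r_0\kappa_\star=1$. There the $X$-posterior may have a zero entry, that is, $h(x)=-1$ for some $x$, and the Taylor argument of Lemma~\ref{lem:uniform} (which assumes $\rho<1$) does not literally apply. This is why I would use the direct polynomial expansion for $\alpha=2$, which holds for any distribution, instead of the lemma. Zero entries are harmless for validity, since only nonnegativity is needed. With that, the corollary follows.
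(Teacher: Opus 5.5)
Your proposal is correct and follows essentially the same route as the paper. The exact identity $\bigl(\mathcal L^2(X\!\to\!u)\bigr)^2=1+r_u^2$ turns the original constraint into the ball $r_u^2\leq e^{2\epsilon}-1$, so the surrogate bound of Proposition~\ref{prop:surrogate} becomes an upper bound for \eqref{problem}, and the hypothesis $r_0\leq r_{\rm val}^\star$ makes the surrogate maximizer a valid channel. You write out explicitly the upper bound and achievability that the paper obtains by citing Proposition~\ref{prop:surrogate}, and your boundary-case remark is consistent with the paper, whose identity likewise comes directly from \eqref{lift-h} and \eqref{weighted-moments} without using $\rho$.
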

\begin{proof}
For $\alpha=2$, $A_2=1$ and the third derivative of $(1+t)^2$ is zero.  Equations \eqref{lift-h} and \eqref{weighted-moments} therefore give the exact identity
\begin{align}
\bigl(\mathcal L^2(X\!\to\!u)\bigr)^2=1+r_u^2.           \label{alpha2-exact}
\end{align}
Thus the original privacy constraint is precisely the Euclidean ball
$r_u^2\leq e^{2\epsilon}-1$; there is no Taylor remainder.  The assumed range is equivalent to
$r_0=\sqrt{e^{2\epsilon}-1}\leq r_{\rm val}^\star$, so the two maximizing posterior directions from Proposition~\ref{prop:surrogate} are valid probability distributions in both the $X$ and $Y$ alphabets.  Proposition~\ref{prop:surrogate} now applies to the original problem itself and yields the stated value and mechanism.
\end{proof}

For implementation beyond the conservative analytic range \eqref{epsilonmax}, exact feasibility can still be guaranteed by a one-dimensional search.  The following result justifies the construction used in the numerical section.

\begin{lemma}[Exact directional line search]\label{lem:line-search}
For $s\in\{-1,+1\}$, define
\begin{align*}
F_s(r)\triangleq
\sum_xP_X(x)\left(1+s r\frac{v^\star(x)}{\sqrt{P_X(x)}}\right)^\alpha.
\end{align*}
The domain of each function is $0\leq r\leq r_{\rm val}^\star$.
Each $F_s$ is continuous and nondecreasing.  Consequently, the set of scales for which both exact lift constraints hold is a closed interval $[0,t_\epsilon]$, where
\begin{align}
t_\epsilon\triangleq
\max\left\{r\in[0,r_{\rm val}^\star]:
\max_{s\in\{-1,+1\}}F_s(r)\leq e^{\alpha\epsilon}\right\}.             \label{exact-line-search}
\end{align}
The corresponding equiprobable binary mechanism is valid and has exact utility
$t_\epsilon^2\sigma_{\max}^2(W)$.
\end{lemma}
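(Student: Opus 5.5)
The plan is to reduce everything to the scalar function $w(x)\triangleq v^\star(x)/\sqrt{P_X(x)}$, i.e.\ $w=D_X^{-1/2}v^\star$, and use convexity. Since $v^\star\perp\sqrt{P_X}$, we have $\sum_xP_X(x)w(x)=\langle\sqrt{P_X},v^\star\rangle=0$. On the domain $0\leq r\leq r_{\rm val}^\star\leq\kappa_\star^{-1}$, every base satisfies $1+srw(x)\geq1-r\kappa_\star\geq0$. Hence each summand $r\mapsto P_X(x)(1+srw(x))^\alpha$ is a power $t\mapsto t^\alpha$ of a nonnegative affine function of $r$. This gives continuity of $F_s$ immediately. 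Because $\alpha>1$, the map $t\mapsto t^\alpha$ is convex on $[0,\infty)$, so $F_s$ is convex on $[0,r_{\rm val}^\star]$.

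Next I would show monotonicity from convexity plus a vanishing slope at the origin. Each summand is continuously differentiable on the domain, including at points where a base hits zero, because the exponent $\alpha-1$ is positive. This gives
\begin{align*}
F_s'(r)=\alpha s\sum_xP_X(x)w(x)\bigl(1+srw(x)\bigr)^{\alpha-1},
\end{align*}
so $F_s'(0)=\alpha s\sum_xP_X(x)w(x)=0$. A convex differentiable function has a nondecreasing derivative, so $F_s'(r)\geq F_s'(0)=0$ on the domain. Therefore $F_s$ is nondecreasing. This is the only step with content; the main care point is the endpoint where some base $1+srw(x)$ may vanish, and the positivity of $\alpha-1$ handles it.

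Then I would identify the feasible set. By \eqref{lift-h} with $h_{u}=s\,r\,w$, we have $F_s(r)=(\mathcal L^\alpha(X\!\to\!u))^\alpha$ for the output whose $X$-posterior is $P_X+srD_X^{1/2}v^\star$. So both exact lift constraints at scale $r$ amount to $G(r)\triangleq\max_sF_s(r)\leq e^{\alpha\epsilon}$. The function $G$ is continuous and nondecreasing on the compact interval $[0,r_{\rm val}^\star]$, and $G(0)=1\leq e^{\alpha\epsilon}$. Its sublevel set is therefore nonempty, closed, and closed downward. Hence it equals $[0,t_\epsilon]$, with the maximum in \eqref{exact-line-search} attained.

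Finally, for any $r\leq r_{\rm val}^\star$, I would repeat Step~2 of the proof of Theorem~\ref{thm:feasible} with $r_-$ replaced by $r$. The bound $r\kappa_\star\leq1$ gives nonnegative $X$-posteriors, and $r\leq r_Y^\star$ gives nonnegative $Y$-posteriors $P_Y\pm rd^\star$. Normalization and Markov consistency hold exactly as there, and \eqref{bayes-channel} yields a valid equiprobable binary channel. By \eqref{utility-coordinate} and $\|Wv^\star\|_2=\sigma_{\max}(W)$, its utility at $r=t_\epsilon$ is $\tfrac12t_\epsilon^2\sigma_{\max}^2(W)+\tfrac12t_\epsilon^2\sigma_{\max}^2(W)=t_\epsilon^2\sigma_{\max}^2(W)$.
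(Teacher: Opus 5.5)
Your proof is correct and follows the paper's argument essentially step for step. Convexity of $F_s$ together with $F_s'(0)=s\alpha\langle v^\star,\sqrt{P_X}\rangle=0$ gives monotonicity, and continuity plus monotonicity make the joint sublevel set a closed interval $[0,t_\epsilon]$; validity and the utility $t_\epsilon^2\sigma_{\max}^2(W)$ then follow from Step~2 of Theorem~\ref{thm:feasible} and \eqref{utility-coordinate}. The only cosmetic difference is in how convexity is obtained: you compose $t\mapsto t^\alpha$ with affine maps and treat a vanishing base at the endpoint explicitly, whereas the paper computes $F_s''\geq0$ on the interior and extends by continuity.
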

\begin{proof}
Set $a_x=v^\star(x)/\sqrt{P_X(x)}$.  On the interior of the validity interval,
\begin{align*}
F_s'(r)&=s\alpha\sum_xP_X(x)a_x(1+sra_x)^{\alpha-1},\\
F_s''(r)&=\alpha(\alpha-1)\sum_xP_X(x)a_x^2
(1+sra_x)^{\alpha-2}\geq0.
\end{align*}
Moreover,
$F_s'(0)=s\alpha\sum_xP_X(x)a_x
=s\alpha\langle v^\star,\sqrt{P_X}\rangle=0$.
Convexity therefore implies $F_s'(r)\geq0$ for $r\geq0$, and continuity extends monotonicity to the endpoints.  Both functions equal one at $r=0$, so the joint sublevel set in \eqref{exact-line-search} is nonempty.  Monotonicity and continuity make it a closed interval, and compactness of $[0,r_{\rm val}^\star]$ guarantees the existence of its largest element.

The definition of $r_{\rm val}^\star$ guarantees nonnegativity of the $X$- and $Y$-posteriors, and their normalization and Markov consistency were established in Step 2 of Theorem~\ref{thm:feasible}.  The constraint in \eqref{exact-line-search} is exactly the unexpanded $\alpha$-lift constraint.  Finally, \eqref{utility-coordinate} and the singular-vector identity give utility
$t_\epsilon^2\sigma_{\max}^2(W)$.
\end{proof}

The line search maximizes utility only in the prescribed symmetric binary family; it is not a global solution of \eqref{problem} for general $\alpha$. Within the range of Theorem~\ref{thm:feasible}, its utility is at least $V_-$, so \eqref{mechanism-gap} also bounds its suboptimality. For any tested scale its exact Taylor remainder is computable without $\rho$:
\begin{align*}
R_s(r)&=F_s(r)-1-A_\alpha r^2,\qquad s\in\{-1,+1\}.
\end{align*}
For $\alpha=3$, setting $m_3=\sum_xv^\star(x)^3/\sqrt{P_X(x)}$ yields the exact polynomial
\begin{align*}
F_s(r)=1+3r^2+s m_3r^3.
\end{align*}
Thus both constraints reduce to $3r^2+|m_3|r^3\leq e^{3\epsilon}-1$ together with posterior validity. Computing this directional remainder does not, by itself, bound the gap to the unrestricted optimum; the outer bound in Theorem~\ref{thm:feasible} supplies that guarantee.

\subsection{Point-wise R\'enyi-divergence privacy}\label{subsec:renyi}
We now replace the lift constraint by a point-wise R\'enyi-divergence constraint. For order $\beta>1$, R\'enyi divergence is defined as \cite{renyi1961measures}
\begin{align*}
D_\beta(P\|Q)\triangleq\frac{1}{\beta-1}
\log\sum_aQ(a)\left(\frac{P(a)}{Q(a)}\right)^\beta,
\end{align*}
and denote by $V_{\rm R}(\eta,\beta)$ the optimum of \eqref{problem} when \eqref{local1} is replaced by
\begin{align}
D_\beta(P_{X|U=u}\|P_X)\leq\eta,\qquad\forall u.
\label{renyiconstraint}
\end{align}
The defining identity
\begin{align}
e^{(\beta-1)D_\beta(P_{X|U=u}\|P_X)}
=\bigl(\mathcal L^\beta(X\!\to\!u)\bigr)^\beta
\label{renyilift}
\end{align}
shows that the two privacy notions have exactly the same feasible posteriors after rescaling the budget. In particular,
\begin{align}
V_{\rm R}(\eta,\beta)
=V_{\chi^2}\!\left(\frac{\beta-1}{\beta}\eta,\beta\right).
\label{renyiequiv}
\end{align}
This equality is exact and does not rely on a high-privacy approximation.
Indeed, both sides of \eqref{renyilift} equal
\begin{align*}
\sum_xP_X(x)\left(\frac{P_{X|U}(x|u)}{P_X(x)}\right)^\beta.
\end{align*}
Therefore
$D_\beta(P_{X|U=u}\|P_X)\leq\eta$ if and only if
$\mathcal L^\beta(X\!\to\!u)\leq
e^{(\beta-1)\eta/\beta}$.  Identifying
$\alpha=\beta$ and $\epsilon=(\beta-1)\eta/\beta$ proves \eqref{renyiequiv} directly, including equality of the feasible sets rather than only equality of their local approximations.

For completeness, we state the resulting design and bounds directly in the R\'enyi budget. Write $A_\beta$ and $K_{\beta,\rho}$ as in Lemma~\ref{lem:uniform}, with $\alpha$ replaced by $\beta$, and define
\begin{align*}
b_\eta^{\rm R}\triangleq e^{(\beta-1)\eta}-1,\
(r_{0,\rm R})^2\triangleq\frac{b_\eta^{\rm R}}{A_\beta},\
\delta_{0,\rm R}\triangleq\kappa_\star r_{0,\rm R}.
\end{align*}
The quadratic surrogate value is
\begin{align*}
V_{0,\rm R}
\triangleq(r_{0,\rm R})^2\sigma_{\max}^2(W)
=\frac{2(e^{(\beta-1)\eta}-1)}
{\beta(\beta-1)}\sigma_{\max}^2(W),
\end{align*}
and
\begin{align*}
\bar\delta_\eta^{\rm R}&\triangleq
\begin{cases}
\sqrt{(e^\eta-1)/p_{\min}},&\beta\geq2,\\
p_{\min}^{-1}\sqrt{2\eta},&1<\beta<2.
\end{cases}
\end{align*}
For $K_{\beta,\rho}>0$, set
$c_{\beta,\rho}\triangleq\min\{\rho,A_\beta/(2K_{\beta,\rho})\}$; otherwise set $c_{\beta,\rho}=\rho$. An explicit sufficient range is
\begin{align}
\eta_{\rm dir}^{\rm R}&\triangleq
\frac{1}{\beta-1}\log(1+A_\beta m_\rho^2),\nonumber\\
\eta_{\rm rem}^{\rm R}&\triangleq
\begin{cases}
\log(1+p_{\min}c_{\beta,\rho}^2),&\beta\geq2,\\
\tfrac12p_{\min}^2c_{\beta,\rho}^2,&1<\beta<2,
\end{cases}\nonumber\\
\eta_{\max}^{\rm R}(\rho)&\triangleq
\min\{\eta_{\rm dir}^{\rm R},\eta_{\rm rem}^{\rm R}\}.
\label{renyimax}
\end{align}

\begin{corollary}[R\'enyi-private design]\label{cor:renyi}
Fix $\rho\in(0,1)$. Suppose $0<\eta\leq\eta_{\max}^{\rm R}(\rho)$ and let
\begin{align}
(r_{-,\rm R})^2\triangleq
\frac{b_\eta^{\rm R}}
{A_\beta+K_{\beta,\rho}\delta_{0,\rm R}}.
\label{renyiradius}
\end{align}
The equiprobable binary mechanism in \eqref{safex}--\eqref{safey}, with $r_-$ replaced by $r_{-,\rm R}$, is a valid channel satisfying \eqref{renyiconstraint} exactly. Moreover,
\begin{align}
\frac{b_\eta^{\rm R}\sigma_{\max}^2(W)}
{A_\beta+K_{\beta,\rho}\delta_{0,\rm R}}
\leq V_{\rm R}(\eta,\beta)
\leq\frac{b_\eta^{\rm R}\sigma_{\max}^2(W)}
{A_\beta-K_{\beta,\rho}\bar\delta_\eta^{\rm R}}.
\label{renyisandwich}
\end{align}
Define
\begin{align*}
G_{-,\rm R}\triangleq
\frac{K_{\beta,\rho}\delta_{0,\rm R}}
{A_\beta(A_\beta+K_{\beta,\rho}\delta_{0,\rm R})},\
G_{+,\rm R}\triangleq
\frac{K_{\beta,\rho}\bar\delta_\eta^{\rm R}}
{A_\beta(A_\beta-K_{\beta,\rho}\bar\delta_\eta^{\rm R})}.
\end{align*}
Then
\begin{align}
|V_{\rm R}-V_{0,\rm R}|
\leq b_\eta^{\rm R}\sigma_{\max}^2(W)
\max\{G_{-,\rm R},G_{+,\rm R}\}.
\label{renyigap}
\end{align}
Thus, the absolute and relative gaps are $O(\eta^{3/2})$ and $O(\sqrt\eta)$, respectively. For $\beta=2$, the relation is quadratic exactly and, whenever
$\eta\leq\log(1+(r_{\rm val}^\star)^2)$,
%\begin{align}
$V_{\rm R}(\eta,2)=(e^\eta-1)\sigma_{\max}^2(W).$
%\label{renyibeta2}
%\end{align}
\end{corollary}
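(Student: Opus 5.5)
The plan is to reduce Corollary~\ref{cor:renyi} to Theorem~\ref{thm:feasible} through the exact equivalence \eqref{renyiequiv}. Set $\alpha=\beta$ and $\epsilon=(\beta-1)\eta/\beta$. Then every lift quantity becomes its R\'enyi counterpart:
\begin{itemize}
\item $b_\epsilon=e^{\alpha\epsilon}-1=e^{(\beta-1)\eta}-1=b_\eta^{\rm R}$;
\item $r_0(\epsilon)=r_{0,\rm R}$ and $\delta_0(\epsilon)=\delta_{0,\rm R}$;
\item $r_-(\epsilon;\rho)=r_{-,\rm R}$, by comparing \eqref{saferadius} with \eqref{renyiradius};
\item $A_\beta=A_\alpha$ and $K_{\beta,\rho}=K_{\alpha,\rho}$ trivially.
\end{itemize}
For $\bar\delta$, note that $\alpha\epsilon/(\alpha-1)=\eta$ and $2\alpha\epsilon/(\alpha-1)=2\eta$, so $\bar\delta_\epsilon=\bar\delta_\eta^{\rm R}$ in both order regimes.

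Next I check the budget ranges. We have $\epsilon\le\epsilon_{\rm dir}$ iff $e^{\alpha\epsilon}\le 1+A_\alpha m_\rho^2$, i.e.\ $(\beta-1)\eta\le\log(1+A_\beta m_\rho^2)$, which is $\eta\le\eta^{\rm R}_{\rm dir}$. Similarly, $\epsilon\le\epsilon_{\rm rem}$ is equivalent to $\eta\le\eta^{\rm R}_{\rm rem}$:
\begin{itemize}
\item for $\beta\ge2$, multiply by $\alpha/(\alpha-1)$;
\item for $1<\beta<2$, use $(\alpha-1)p_{\min}^2c^2/(2\alpha)\cdot\alpha/(\alpha-1)=p_{\min}^2c^2/2$.
\end{itemize}
Since $c_{\beta,\rho}=c_\rho$, the hypothesis $0<\eta\le\eta^{\rm R}_{\max}(\rho)$ is exactly $0<\epsilon\le\epsilon_{\max}(\rho)$.

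Theorem~\ref{thm:feasible} then applies. The binary mechanism \eqref{safex}--\eqref{safey} with $r_{-,\rm R}$ is a valid channel satisfying $\mathcal L^\beta(X\to u)\le e^{\epsilon}$, which by \eqref{renyilift} is precisely \eqref{renyiconstraint}. By \eqref{renyiequiv}, the sandwich \eqref{sandwich} and the gap \eqref{gap} become \eqref{renyisandwich} and \eqref{renyigap}. Also $V_0(\epsilon,\beta)=b_\epsilon\sigma_{\max}^2/A_\beta=V_{0,\rm R}$, and $G_{\pm,\rm R}$ coincide with $G_\pm$. The orders transfer because $\epsilon$ is a fixed positive multiple of $\eta$: $O(\epsilon^{3/2})=O(\eta^{3/2})$ and $O(\sqrt\epsilon)=O(\sqrt\eta)$.

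For $\beta=2$, Corollary~\ref{cor:alpha2} with $\epsilon=\eta/2$ applies under the condition $\eta/2\le\frac12\log(1+(r^\star_{\rm val})^2)$, i.e.\ $\eta\le\log(1+(r^\star_{\rm val})^2)$. It gives $V_{\rm R}(\eta,2)=V_{\chi^2}(\eta/2,2)=(e^{\eta}-1)\sigma_{\max}^2(W)$.

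The only real work is the bookkeeping for $\bar\delta$ and the range definitions in the two $\beta$ regimes; there is no genuine obstacle. The main point to verify carefully is that \eqref{renyiequiv} is an equality of feasible sets, so both the achievability and the outer bound transfer, and not merely the optimal values.
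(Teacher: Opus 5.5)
Your proposal is correct and follows essentially the same route as the paper: the substitution $\alpha=\beta$, $\epsilon=(\beta-1)\eta/\beta$ turns $b_\epsilon$, $r_0$, $\delta_0$, $\bar\delta_\epsilon$ and both budget conditions ($\epsilon_{\rm dir}$ and the two $\epsilon_{\rm rem}$ regimes) into their R\'enyi counterparts. Theorem~\ref{thm:feasible} then gives the mechanism, sandwich, gap and orders, and Corollary~\ref{cor:alpha2} with $\epsilon=\eta/2$ gives the $\beta=2$ case, exactly as in the paper's proof.
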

\begin{proof}
Set
\begin{align}
\alpha&=\beta,\notag\\
\epsilon&=\frac{\beta-1}{\beta}\eta.                     \label{renyi-substitution}
\end{align}
By \eqref{renyilift}, this substitution maps the exact R\'enyi constraint \eqref{renyiconstraint} to the exact lift constraint \eqref{local1}; no approximation is involved.  Under \eqref{renyi-substitution},
\begin{align*}
b_\epsilon=e^{\alpha\epsilon}-1
=e^{(\beta-1)\eta}-1=b_\eta^{\rm R},
\end{align*}
so $r_0=r_{0,\rm R}$ and $\delta_0=\delta_{0,\rm R}$.  The two cases in the outer neighborhood bound transform as
\begin{align*}
e^{\alpha\epsilon/(\alpha-1)}-1&=e^\eta-1,
&&\beta\geq2,\\
\frac{2\alpha\epsilon}{\alpha-1}&=2\eta,
&&1<\beta<2,
\end{align*}
which gives exactly $\bar\delta_\eta^{\rm R}$.

Next, the directional budget condition
\begin{align*}
\epsilon\leq\frac1\alpha\log(1+A_\alpha m_\rho^2)
\end{align*}
becomes $\eta\leq\eta_{\rm dir}^{\rm R}$.  Similarly, the two cases defining $\epsilon_{\rm rem}$ become the two cases defining $\eta_{\rm rem}^{\rm R}$.  Hence
$\eta\leq\eta_{\max}^{\rm R}(\rho)$ is precisely the sufficient range needed to invoke Theorem~\ref{thm:feasible} after the substitution.

Applying that theorem gives the radius \eqref{renyiradius}, exact feasibility, and the two bounds in \eqref{renyisandwich}.  Applying its gap calculation gives \eqref{renyigap}.  Because
$b_\eta^{\rm R}=O(\eta)$ and both neighborhood radii are $O(\sqrt\eta)$, the same calculation yields the stated absolute and relative orders.  Finally, when $\beta=2$, Corollary~\ref{cor:alpha2} applies with
$\epsilon=\eta/2$; its range becomes
$\eta\leq\log(1+(r_{\rm val}^\star)^2)$ and its value becomes
$(e^\eta-1)\sigma_{\max}^2(W)$.
\end{proof}

Proposition~\ref{prop:rho-choice} applies verbatim under \eqref{renyi-substitution}: take $\rho_\eta^{\rm R}=\max\{\delta_{0,\rm R},\bar\delta_\eta^{\rm R}\}$, test $\rho_\eta^{\rm R}<1$, $r_{0,\rm R}\leq r_Y^\star$, and $K_{\beta,\rho_\eta^{\rm R}}\bar\delta_\eta^{\rm R}\leq A_\beta/2$. The certificate in \eqref{renyigap} is then minimal within this family of certificates and depends only on $P_{XY},\beta,\eta$ after fixing $v^\star$. The same one-dimensional line search can be used with \eqref{renyiconstraint}. Also, for an arbitrary leakage matrix, Theorem~\ref{baghal5} remains exact after replacing
$\Phi_\alpha(q)\leq e^{\alpha\epsilon}$ by
$\Phi_\beta(q)\leq e^{(\beta-1)\eta}$. Numerically, every fixed-order $\alpha$-lift curve below is simultaneously a R\'enyi-private curve under the horizontal-axis conversion $\eta=\beta\epsilon/(\beta-1)$. For the medical example with $\beta=3$, the proved lift range $\epsilon_{\max}=0.0241$ therefore becomes $\eta_{\max}^{\rm R}=0.0361$.

\subsection{Numerical examples}\label{subsec:numerical}
We now evaluate the singular-vector mechanism under the original, unexpanded $\alpha$-lift constraint.  In both examples, the scale is computed from Lemma~\ref{lem:line-search}, so every reported solid curve is exactly feasible.  The examples are separated to distinguish the validation of the analytic bounds from the effect of varying $\alpha$.

\begin{example}[Medical disclosure]\label{ex:medical}
Let $X\in\{\mathrm{A},\mathrm{B},\mathrm{C}\}$ denote a sensitive diagnosis category (viral, bacterial, or cardiac), and let $Y\in\{\mathrm{Low},\mathrm{Med},\mathrm{High}\}$ denote a triage-risk level computed from vital signs and laboratory measurements.  We use
\begin{align*}
P_X&=[0.2,0.3,0.5]^T,\\
P_{X|Y}&=
\begin{bmatrix}
0.85&0.10&0.15\\
0.10&0.80&0.15\\
0.05&0.10&0.70
\end{bmatrix}.
\end{align*}
The matrix is invertible and
\begin{align*}
P_Y=P_{X|Y}^{-1}P_X=[0.0884,0.2376,0.6740]^T.
\end{align*}
The dominant centered singular direction and its gain are
\begin{align*}
v^\star&=[-0.8862,0.1643,0.4332]^T,\\
\sigma_{\max}^2(W)&=3.8970.
\end{align*}
For every $\epsilon$, we solve the scalar problem \eqref{exact-line-search}, form the two posteriors
\begin{align*}
P_{X|U=u_{1,2}}&=P_X\pm t_\epsilon D_X^{1/2}v^\star,\\
P_{Y|U=u_{1,2}}&=P_Y\pm t_\epsilon d^\star,
\end{align*}
and evaluate the exact utility
$\chi^2(Y;U)=t_\epsilon^2\sigma_{\max}^2(W)$.  With $\alpha=3$ and $\rho=0.9$, the constants are
\begin{align*}
\kappa_\star&=1.9816,\\
r_{\rm val}^\star=r_Y^\star&=0.1580,\\
\epsilon_{\max}&=0.0241.
\end{align*}
Thus the full range displayed in Fig.~\ref{fig:medical-long} lies inside the sufficient range of Theorem~\ref{thm:feasible}.
As an independent numerical check, at $\epsilon=0.02$ the analytic lower and upper bounds are $0.07337$ and $0.09234$, the directional utility is $0.07532$, and $V_0=0.08033$. A finite-posterior linear program using 720 feasible boundary directions yields a feasible utility of $0.08071$, with its barycenter and exact privacy constraints checked directly. This is a lower bound on the unrestricted optimum, not a computation of that optimum. It demonstrates both that the symmetric directional construction need not be globally optimal and that $V_0$ need not be an upper bound. The absolute-error certificate accommodates this behavior.

\begin{figure}[t]
\centering
\includegraphics[width=0.92\columnwidth]{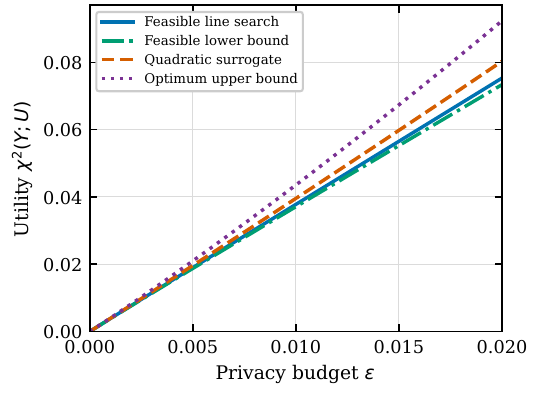}
\caption{Medical disclosure with $\alpha=3$.  The exact directional search is feasible for the original constraint.  The analytic lower and upper bounds are from Theorem~\ref{thm:feasible}, and the quadratic curve is \eqref{surrogatevalue}.}
\label{fig:medical-long}
\end{figure}

The figure illustrates the achievable curve within the proved bounds \eqref{sandwich}; it does not numerically compute the unrestricted optimum.  The analytic radius $r_-$ gives the guaranteed feasible lower curve; the exact directional search can use a larger scale and therefore lies above it.  The upper curve bounds the optimum over all channels, not merely over the chosen direction.  Finally, all curves have the same first-order slope because their pairwise differences are $O(\epsilon^{3/2})$ whereas their leading values are $\Theta(\epsilon)$.
\end{example}

\begin{example}[Smart-meter disclosure]\label{ex:smart-meter}
Let $X\in\{\mathrm{Single},\mathrm{Family},\mathrm{SmallBusiness}\}$ denote a sensitive household type and let $Y\in\{\mathrm{Low},\mathrm{Mid},\mathrm{High}\}$ denote a load-pattern category useful for grid management.  We use
\begin{align*}
P_X&=[0.25,0.35,0.40]^T,\\
P_{X|Y}&=
\begin{bmatrix}
0.80&0.15&0.10\\
0.15&0.70&0.20\\
0.05&0.15&0.70
\end{bmatrix}.
\end{align*}
Here
\begin{align*}
P_Y&=[0.1915,0.3191,0.4894]^T,\\
v^\star&=[-0.3920,0.8058,-0.4438]^T,\\
\sigma_{\max}^2(W)&=3.9410,\\
\kappa_\star&=1.3621,\qquad r_{\rm val}^\star=0.3473.
\end{align*}
For each $\alpha\in\{2,3,5,10\}$ and each privacy budget, we compute $t_\epsilon$ from the two exact constraints in \eqref{exact-line-search}.  The corresponding solid curve in Fig.~\ref{fig:smartmeter-long} is
$t_\epsilon^2\sigma_{\max}^2(W)$, while the dashed curve is the quadratic value
\begin{align*}
V_0(\epsilon,\alpha)=
\frac{2(e^{\alpha\epsilon}-1)}{\alpha(\alpha-1)}
\sigma_{\max}^2(W).
\end{align*}

\begin{figure}[t]
\centering
\includegraphics[width=0.94\columnwidth]{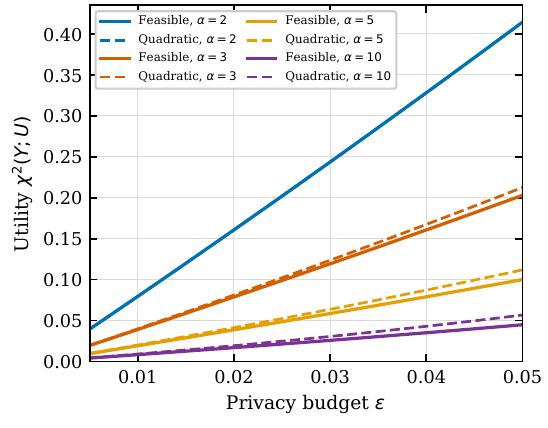}
\caption{Smart-meter disclosure for several orders $\alpha$.  Solid curves use the exact-feasible line search; dashed curves show the quadratic surrogate.  For $\alpha=2$, the two coincide throughout the displayed posterior-validity range.}
\label{fig:smartmeter-long}
\end{figure}

The solid curves illustrate that increasing $\alpha$ tightens the point-wise privacy constraint and cannot increase the best utility in this fixed-direction family.  Corollary~\ref{cor:alpha2} explains the exact coincidence for $\alpha=2$.  For every other order, the solid and dashed curves share the first-order behavior predicted by Theorem~\ref{thm:feasible}; their separation is caused only by the nonlinear privacy constraint because the $\chi^2$ utility itself is exact.
\end{example}

\subsection{General leakage matrices}
We now allow arbitrary alphabet sizes and any column-stochastic leakage matrix $A\triangleq P_{X|Y}$. Exact feasibility can be retained without a Taylor approximation. Define
%\begin{align*}
$\Phi_\alpha(q)\triangleq
\sum_xP_X(x)\left(\frac{(Aq)(x)}{P_X(x)}\right)^\alpha,$ and $
\mathcal Q_\epsilon^{\rm ex}\triangleq
\{q\in\mathbb R_+^{|\mathcal Y|}:\bm1^Tq=1,
\ \Phi_\alpha(q)\leq e^{\alpha\epsilon}\}.$
%\end{align*}
This is a compact convex set for every $\alpha>1$.
Indeed, $q\mapsto Aq$ is affine and $t\mapsto t^\alpha$ is convex on $[0,\infty)$, so $\Phi_\alpha$ is convex.  Its sublevel set is closed and convex; intersecting it with the compact probability simplex gives a compact convex set.  It is nonempty because $AP_Y=P_X$ and therefore $\Phi_\alpha(P_Y)=1\leq e^{\alpha\epsilon}$.

\begin{theorem}\label{baghal5}
For any leakage matrix $A$, the original problem \eqref{problem} is exactly equivalent to
\begin{align}\label{generalproblem}
\max_{P_U,\{q_u\}}\quad
&\sum_uP_U(u)\left\|D_Y^{-1/2}(q_u-P_Y)\right\|_2^2\\[-1mm]
\mathrm{s.t.}\quad &q_u\in\mathcal Q_\epsilon^{\rm ex},\quad
\sum_uP_U(u)q_u=P_Y,\notag\\
&P_U(u)\geq0,\quad\sum_uP_U(u)=1,\quad
|\mathcal U|\leq|\mathcal Y|.\notag
\end{align}
Moreover, an optimizer can be selected with every $q_u$ an extreme point of $\mathcal Q_\epsilon^{\rm ex}$.
\end{theorem}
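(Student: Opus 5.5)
We will prove the equivalence by passing between channels and collections of posteriors, as in the proof of Proposition~\ref{car1}, and then handle the extreme-point claim by a second Carath\'eodory-type reduction inside each posterior.

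\emph{Step 1: feasible channels give feasible points of \eqref{generalproblem}.} Take any channel $P_{U|Y}$ feasible for \eqref{problem}, and let $q_u=P_{Y|U=u}$ and $P_U(u)>0$.
\begin{itemize}
\item The Markov chain gives $P_{X|U=u}=Aq_u$, so the $\alpha$-lift constraint \eqref{local1} reads exactly $\Phi_\alpha(q_u)\leq e^{\alpha\epsilon}$. Hence $q_u\in\mathcal Q_\epsilon^{\rm ex}$.
\item Total probability gives the barycenter condition $\sum_uP_U(u)q_u=P_Y$.
\item The utility $\chi^2(Y;U)=\sum_uP_U(u)\chi^2(q_u\|P_Y)$ equals the objective of \eqref{generalproblem} identically, with no approximation.
\end{itemize}

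\emph{Step 2: feasible points of \eqref{generalproblem} give channels.} Given any feasible $(P_U,\{q_u\})$, formula \eqref{bayes-channel} defines a valid channel, because $P_Y$ has full support. By Bayes' rule its posteriors are exactly the $q_u$. So the two feasible sets correspond with equal objective values.

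\emph{Step 3: the cardinality constraint costs nothing.} Proposition~\ref{car1} was proved without invertibility of $A$; its proof explicitly notes this. It shows that the supremum of \eqref{problem} is attained with $|\mathcal U|\leq|\mathcal Y|$. Adding $|\mathcal U|\leq|\mathcal Y|$ therefore changes nothing, and the supremum in \eqref{generalproblem} is a maximum.

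\emph{Step 4: extreme-point optimizer.} Take an optimizer, and write each $q_u$ as a convex combination of extreme points of the compact convex set $\mathcal Q_\epsilon^{\rm ex}$. This is possible by Krein--Milman, or by Minkowski--Carath\'eodory in finite dimensions, with finitely many terms.
\begin{itemize}
\item Replace $u$ by several outputs that split $P_U(u)$ according to these weights. The barycenter is preserved.
\item Each new posterior is an extreme point, hence lies in $\mathcal Q_\epsilon^{\rm ex}$, so feasibility is preserved.
\item The map $g(q)=\|D_Y^{-1/2}(q-P_Y)\|_2^2$ is convex, so by Jensen the objective does not decrease. Since we started at an optimum, the new mechanism is also optimal.
\end{itemize}

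\emph{Step 5: restore the cardinality bound.} The splitting may produce more than $|\mathcal Y|$ outputs, all supported on extreme points. We rerun the weight-elimination argument from the proof of Proposition~\ref{car1}. When there are $m>n$ outputs, the vectors $(1,q_i)$ are linearly dependent. Moving the weights along $\pm a$ keeps the barycenter, and optimality forces $\sum_i a_ig(q_i)=0$, so utility is preserved. Pushing to the boundary removes one output. This step only deletes posteriors and never creates new ones, so all remaining $q_u$ stay extreme points. Repeating gives an optimizer with at most $|\mathcal Y|$ outputs, each an extreme point.

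\emph{Main obstacle.} The delicate point is the interaction between Steps 4 and 5. Step 4's splitting may break the bound $|\mathcal U|\leq|\mathcal Y|$. Step 5 must restore it without leaving the set of extreme points, and without losing utility; this relies on the first-order optimality argument rather than on convexity.
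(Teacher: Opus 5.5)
Your proposal is correct and follows essentially the same route as the paper. Both arguments use the two-way correspondence between channels and posterior families via \eqref{bayes-channel}, and Proposition~\ref{car1} for attainment and the cardinality bound. Both then split each optimal posterior into extreme points of $\mathcal Q_\epsilon^{\rm ex}$, with convexity of $g$ ensuring no loss. Finally, both use the linear-dependence weight elimination from Proposition~\ref{car1} to restore $|\mathcal U|\leq|\mathcal Y|$ without creating new posteriors; the paper phrases this step as taking a vertex of the finite linear program over the mixing weights. Two small precisions: the finite extreme-point representation comes from Minkowski's theorem (Krein--Milman alone gives only a closed convex hull), and the vectors $(1,q_i)$ are dependent for $m>n$ because $\bm1^Tq_i=1$ confines them to an $n$-dimensional subspace.
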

\begin{proof}
We first prove equivalence in both directions.  Given a channel $P_{U|Y}$, set
$p_u=P_U(u)$ and $q_u=P_{Y|U=u}$.  The Markov chain gives
\begin{align*}
P_{X|U=u}=Aq_u,
\end{align*}
and total probability gives $\sum_up_uq_u=P_Y$.  Furthermore,
\begin{align*}
\bigl(\mathcal L^\alpha(X\!\to\!u)\bigr)^\alpha
&=\sum_xP_X(x)\left(\frac{P_{X|U}(x|u)}{P_X(x)}\right)^\alpha\\
&=\sum_xP_X(x)\left(\frac{(Aq_u)(x)}{P_X(x)}\right)^\alpha
=\Phi_\alpha(q_u).
\end{align*}
Thus the original point-wise constraint is exactly $q_u\in\mathcal Q_\epsilon^{\rm ex}$.  The utility is
\begin{align*}
\chi^2(Y;U)
&=\sum_up_u\sum_y\frac{(q_u(y)-P_Y(y))^2}{P_Y(y)}\\
&=\sum_up_u\left\|D_Y^{-1/2}(q_u-P_Y)\right\|_2^2,
\end{align*}
which is the objective in \eqref{generalproblem}.

Conversely, suppose $\{p_u,q_u\}$ is feasible for \eqref{generalproblem}.  Since $P_Y$ has full support, define
\begin{align*}
P_{U|Y}(u|y)=\frac{p_uq_u(y)}{P_Y(y)}.
\end{align*}
Nonnegativity is immediate, and
\begin{align*}
\sum_uP_{U|Y}(u|y)
=\frac{\sum_up_uq_u(y)}{P_Y(y)}=1.
\end{align*}
Hence this is a valid channel.  Bayes' rule gives posterior $q_u$, the Markov chain produces $Aq_u$, and membership in
$\mathcal Q_\epsilon^{\rm ex}$ gives the original exact privacy constraint.  This proves the equivalence.  Proposition~\ref{car1} supplies existence and the cardinality bound.

It remains to prove the extreme-point claim.  Let
\begin{align*}
g(q)=\left\|D_Y^{-1/2}(q-P_Y)\right\|_2^2.
\end{align*}
This is a convex function of $q$.  If an optimal posterior $q_u$ is not extreme in $\mathcal Q_\epsilon^{\rm ex}$, finite-dimensional convexity gives a representation
\begin{align*}
q_u&=\sum_{j=1}^{m_u}\lambda_{u,j}e_{u,j},\\
e_{u,j}&\in\operatorname{ext}(\mathcal Q_\epsilon^{\rm ex}),\\
\lambda_{u,j}&\geq0,\qquad \sum_j\lambda_{u,j}=1.
\end{align*}
Split output $u$ into symbols $(u,j)$ with probabilities
$p_u\lambda_{u,j}$ and posteriors $e_{u,j}$.  The barycenter is unchanged because
\begin{align*}
\sum_{u,j}p_u\lambda_{u,j}e_{u,j}
=\sum_up_uq_u=P_Y,
\end{align*}
and every new posterior remains exactly private.  Convexity gives
\begin{align*}
g(q_u)\leq\sum_j\lambda_{u,j}g(e_{u,j}),
\end{align*}
so splitting cannot decrease utility.  Since the original mechanism was optimal, the split mechanism is also optimal.

This produces a finite optimal mechanism supported entirely on extreme points, although it may use too many outputs.  Restrict attention to the finite collection of extreme points appearing in that split mechanism and maximize the linear objective over their mixing weights subject to the same barycenter.  The split weights are feasible and globally optimal.  An extreme solution of this finite linear program has at most $|\mathcal Y|$ positive weights by the linear-dependence argument in the proof of Proposition~\ref{car1}.  Selecting such a solution preserves optimality, the barycenter, and support on extreme points while restoring the cardinality bound.
\end{proof}

\begin{proposition}[Null-space decomposition]\label{prop:null-space}
Let $A$ be arbitrary and define
$\Delta_u\triangleq A(q_u-P_Y)$.  Every posterior admits the representation
\begin{align}
q_u&=P_Y+A^\dagger\Delta_u+z_u,\notag\\
z_u&\in\operatorname{null}(A),                            \label{null-decomposition}
\end{align}
subject to normalization and nonnegativity.  Its exact utility contribution is
\begin{align}
\left\|D_Y^{-1/2}(A^\dagger\Delta_u+z_u)\right\|_2^2.    \label{null-utility}
\end{align}
If there exists nonzero $z\in\operatorname{null}(A)$ with
$\bm1^Tz=0$ and $P_Y\pm tz\geq0$ for some $t>0$, then positive utility is achievable at perfect privacy.
\end{proposition}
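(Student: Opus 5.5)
The plan is to prove the three claims in order: the decomposition, the utility formula, and the perfect-privacy statement.

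\emph{Decomposition.} Fix $u$ and set $e_u\triangleq q_u-P_Y$. By definition $\Delta_u=Ae_u$ lies in $\operatorname{range}(A)$. The Moore--Penrose identity $AA^\dagger A=A$ then gives
\[
A\bigl(e_u-A^\dagger\Delta_u\bigr)=Ae_u-AA^\dagger Ae_u=0 .
\]
Hence $z_u\triangleq e_u-A^\dagger\Delta_u$ lies in $\operatorname{null}(A)$, and rearranging yields \eqref{null-decomposition}. We can also note that $A^\dagger\Delta_u$ is the orthogonal projection of $e_u$ onto $\operatorname{row}(A)=\operatorname{null}(A)^\perp$, so the split is the orthogonal one and is unique. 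The phrase ``subject to normalization and nonnegativity'' only records that $q_u$ must still be a distribution. Using $\bm1^TP_Y=1$, this amounts to $\bm1^T(A^\dagger\Delta_u+z_u)=0$ together with $q_u\geq0$. Conversely, any $(\Delta_u,z_u)$ satisfying these conditions, with $\Delta_u\in\operatorname{range}(A)$, defines a posterior.

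\emph{Utility.} The objective in Theorem~\ref{baghal5} is $\|D_Y^{-1/2}(q_u-P_Y)\|_2^2$. Substituting the representation gives \eqref{null-utility} directly, and no approximation is involved.

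\emph{Perfect privacy.} Take $z\neq0$ with $Az=0$, $\bm1^Tz=0$, and $P_Y\pm tz\geq0$. Define two equiprobable outputs with posteriors $q_{1,2}=P_Y\pm tz$. These are nonnegative and sum to one. Their barycenter is $P_Y$, so \eqref{bayes-channel} (equivalently, the converse part of Theorem~\ref{baghal5}) yields a valid channel. The privacy part follows from $Aq_{1,2}=AP_Y\pm tAz=P_X$. Thus every $X$-posterior equals the prior, so $\Phi_\alpha(q_{1,2})=1\le e^{\alpha\epsilon}$ for every $\epsilon\geq0$, including $\epsilon=0$, and $X$ is independent of $U$. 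The resulting utility is
\[
t^2\|D_Y^{-1/2}z\|_2^2>0,
\]
which is positive because $z\neq0$ and $D_Y^{-1/2}$ is invertible.

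The only step needing any care is the pseudo-inverse identity $AA^\dagger A=A$, which places $z_u$ in the null space. This is standard, so I expect no real obstacle. I would also remark that $t>0$ always exists, since $P_Y$ has full support.
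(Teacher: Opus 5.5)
Your proof is correct and follows the paper's argument essentially step for step: the same definition $z_u=q_u-P_Y-A^\dagger\Delta_u$, placed in $\operatorname{null}(A)$ via $AA^\dagger\Delta_u=\Delta_u$ on $\operatorname{range}(A)$, then direct substitution for the utility, then the equiprobable pair $P_Y\pm tz$ with $Aq_{1,2}=P_X$ and utility $t^2\|D_Y^{-1/2}z\|_2^2$. The paper goes slightly further in two ways, neither of which the statement as posed requires: it notes that for column-stochastic $A$ with full-support $P_Y$ the side conditions $\bm1^Tz=\bm1^TAz=0$ and the existence of $t>0$ hold for every nonzero null vector, and it proves the converse that positive utility at $\epsilon=0$ forces $\operatorname{null}(A)\neq\{0\}$.
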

\begin{proof}
Because $\Delta_u=A(q_u-P_Y)$ belongs to the range of $A$, the Moore--Penrose identity $AA^\dagger\Delta_u=\Delta_u$ holds.  Define
\begin{align*}
z_u\triangleq q_u-P_Y-A^\dagger\Delta_u.
\end{align*}
Then
\begin{align*}
Az_u&=A(q_u-P_Y)-AA^\dagger\Delta_u\\&=\Delta_u-\Delta_u=0,
\end{align*}
so $z_u\in\operatorname{null}(A)$ and \eqref{null-decomposition} follows.  Substituting this expression into
$\chi^2(q_u\|P_Y)=\|D_Y^{-1/2}(q_u-P_Y)\|_2^2$ gives \eqref{null-utility}.

For the final claim, take equiprobable posteriors
$q_1=P_Y+tz$ and $q_2=P_Y-tz$.  They are normalized and nonnegative by assumption, and their average is $P_Y$.  Since $Az=0$,
\begin{align*}
Aq_1&=Aq_2\\&=AP_Y\\&=P_X,
\end{align*}
so the lift equals one for both outputs: privacy is perfect.  Their average utility is
\begin{align*}
\tfrac12\|D_Y^{-1/2}tz\|_2^2
+\tfrac12\|D_Y^{-1/2}(-tz)\|_2^2
=t^2\|D_Y^{-1/2}z\|_2^2>0.
\end{align*}
In the present full-support model, the side conditions hold for every nonzero $z\in\operatorname{null}(A)$: $\bm1^Tz=\bm1^TAz=0$, and any $0<t\leq\min_{y:z(y)\neq0}P_Y(y)/|z(y)|$ ensures nonnegativity. Conversely, at $\epsilon=0$, strict convexity of $t^\alpha$ and $\sum_xP_X(x)(Aq_u)(x)/P_X(x)=1$ imply that lift at most one forces $Aq_u=P_X$. If the utility is positive, some $q_u\neq P_Y$, so $q_u-P_Y$ is a nonzero null vector. Thus positive utility at perfect privacy is possible if and only if $\operatorname{null}(A)\neq\{0\}$.

When $A$ is invertible, its null space is trivial and \eqref{null-decomposition} reduces to the inverse-coordinate representation used in Lemma~\ref{lem:coordinates}.  When the null space is nontrivial, Proposition~\ref{prop:null-space} shows exactly how utility can survive even at $\epsilon=0$.
\end{proof}
%\vspace{-2mm}
\section{Conclusion}
%\vspace{-1mm}
In this paper, we used $\chi^2$-information as utility for privacy mechanism design under bounded $\alpha$-lift. The utility is exactly quadratic. For invertible $P_{X|Y}$, a uniform Taylor bound for the privacy constraint yields a closed-form surrogate, an explicitly scaled mechanism that is feasible for the original constraint, a computable range of $\epsilon$, and an $O(\epsilon^{3/2})$ absolute optimality-gap bound. The case $\alpha=2$ is exact over an explicit range. Through the exact lift--R\'enyi identity, the same conclusions hold for point-wise R\'enyi-divergence privacy after an explicit budget conversion. Numerical results verify the exact constraint rather than only its surrogate. For arbitrary leakage matrices, we gave a fully exact posterior formulation that also preserves useful null-space directions under perfect privacy.

%\clearpage
%\begingroup
\footnotesize
\bibliographystyle{IEEEtran}
\bibliography{references_extended}
%\endgroup

\end{document}